\definecolor{DarkBlue}{rgb}{0.1,0.1,0.5}
\definecolor{DarkGreen}{rgb}{0.1,0.5,0.1}
\newcommand{\extra}[1]{}
\newtheorem{theorem}{Theorem}
\newtheorem{lemma}[theorem]{Lemma}
\newtheorem{claim}{Claim}
\def\squareforqed{\hbox{\rlap{$\sqcap$}$\sqcup$}}
\def\qed{\ifmmode\squareforqed\else{\unskip\nobreak\hfil
\penalty50\hskip1em\null\nobreak\hfil\squareforqed
\parfillskip=0pt\finalhyphendemerits=0\endgraf}\fi}
\def\endenv{\ifmmode\;\else{\unskip\nobreak\hfil
\penalty50\hskip1em\null\nobreak\hfil\;
\parfillskip=0pt\finalhyphendemerits=0\endgraf}\fi}
\renewenvironment{proof}{\noindent \textbf{{Proof~} }}{\qed\medskip}
\newenvironment{proof+}[1]{\noindent \textbf{{Proof #1~} }}{\qed\medskip}
\mathchardef\ordinarycolon\mathcode`\:
\def\vcentcolon{\mathrel{\mathop\ordinarycolon}}
\newcommand{\EFX}{\textsc{EFX}}
\newcommand{\NSW}{\mathrm{NSW}}
\DeclareMathOperator*{\argmax}{arg\,max}
\newcommand{\goods}{\mathcal{G}}
\newcommand{\M}{{\rm M}}
\newcommand{\EFone}{\textsc{Ef1}}
\newcommand{\PROPone}{\textsc{Prop1}}
\title{\bfseries Tight Approximation Algorithms for $p$-Mean Welfare \\ Under Subadditive Valuations}
\author{Siddharth Barman\thanks{Indian Institute of Science. {\tt barman@iisc.ac.in}} \and Umang Bhaskar\thanks{Tata Institute of Fundamental Research. {\tt umang@tifr.res.in}} \and Anand Krishna\thanks{Indian Institute of Science. {\tt anandkrishna@iisc.ac.in}} \and Ranjani G.~Sundaram\thanks{Chennai Mathematical Institute. {\tt ranjanigs@cmi.ac.in}}}
\date{}
\begin{document}
\maketitle

\begin{abstract}
We develop polynomial-time algorithms for the fair and efficient allocation of indivisible goods among $n$ agents that have subadditive valuations over the goods. We first consider the Nash social welfare as our  
objective and design a polynomial-time algorithm that, in the value oracle model, finds an $8n$-approximation to the Nash optimal allocation. Subadditive valuations include XOS (fractionally subadditive) and submodular valuations as special cases. Our result, even for the special case of 
submodular valuations, improves upon the previously best known $O(n \log n)$-approximation ratio of Garg et al. (2020). 

More generally, we study maximization of $p$-mean welfare. The $p$-mean welfare is parameterized by an exponent term $p \in (-\infty, 1]$ and encompasses a range of welfare functions, such as social welfare ($p = 1$), Nash social welfare ($p \to 0$), and egalitarian welfare ($p \to -\infty$). We give an algorithm that, for subadditive valuations and any given $p \in (-\infty, 1]$, computes (in the value oracle model and in polynomial time) an allocation with $p$-mean welfare at least $\nicefrac{1}{8n}$ times the optimal. 

Further, we show that our approximation guarantees are essentially tight for XOS and, hence, subadditive valuations. We adapt a result of Dobzinski et al. (2010) to show that, under XOS valuations, an $O \left(n^{1-\varepsilon} \right)$ approximation for the $p$-mean welfare for any $p \in (-\infty,1]$ (including the Nash social welfare) requires exponentially many value queries; here, $\varepsilon>0$ is any fixed constant.
\end{abstract}

\section{Introduction}

In discrete fair division, given a set of $m$ goods and $n$ agents, the problem is to integrally allocate the set of goods to the agents in a fair and (economically) efficient manner~\cite{brandt2016handbook,endriss2017trends,aziz2019developments}. In this thread of work, the Nash social welfare---defined as the geometric mean of the agents' valuations for their assigned bundles---has emerged as a fundamental and prominent measure of the quality of an allocation. 
It provides a balance between two central objectives: the social welfare (the sum of the agents' valuations) and the egalitarian welfare (the minimum valuation across the agents). Note that social welfare is a standard measure of (economic) efficiency, whereas egalitarian welfare is a fairness objective.

A Nash optimal allocation (i.e., an allocation that maximizes Nash social welfare) satisfies other fairness and efficiency criteria as well.  
Such an allocation is clearly Pareto optimal. Furthermore, if agents have additive valuations, then a Nash optimal allocation is known to be fair in the sense that it is guaranteed to be \emph{envy-free up to one good} (\EFone)~\cite{CaragiannisKMPS19} and \emph{proportional up to one good} (\PROPone)~\cite{ConitzerFS17}.\footnote{An allocation is said to be $\EFone$ if for any pair of agents $i$ and $j$, there exists a good $g$ in $j$'s bundle, such that $i$ prefers her bundle to the one obtained after removing $g$ from $j$'s bundle. An allocation is said to be $\PROPone$ if for each agent $i$ there exists a good $g$ with the property that including $g$ into $i$'s bundle ensures that $i$ achieves a proportional share, i.e., her valuation ends up being at least $1/n$ times her value for all the goods.} 

As an objective, Nash social welfare is scale invariant: multiplicatively scaling any agent's valuation function by a nonnegative factor does not change the Nash optimal allocation. Furthermore, interesting connections have been established between market models and this welfare function; see, e.g.,~\cite{ColeDGJMVY17,BarmanKV18}. As a practical application, the website {\tt spliddit.org} uses the Nash social welfare as the optimization objective when partitioning indivisible goods~\cite{goldman2015spliddit,CaragiannisKMPS19}.

However, computing a Nash optimal allocation is {\rm APX}-hard, even when the agents have additive valuations~\cite{Lee17}. In terms of approximation algorithms, the problem of maximizing Nash social welfare has received considerable attention in recent years \cite{CG15approximating,ColeDGJMVY17,AGS+17nash, BGH+17earning,AGM+18nash,BarmanKV18,GHM18Approximating}. In particular, a polynomial-time $e^{1/e}$-approximation algorithm is known for additive valuations \cite{BarmanKV18}. This algorithm preserves $\EFone$, up to a factor of $(1+\varepsilon)$, and Pareto optimality. The approximation guarantee of $ e^{1/e} \approx 1.45$ also holds for budget-additive valuations~\cite{ChaudhuryCGGHM18}. 
The work of Garg et al.~\cite{GargKK20} extends this line of work by considering Nash social welfare maximization under submodular valuations. 

Submodular valuations capture the diminishing marginal returns property. They constitute a subclass of subadditive valuations, which, in turn, model complement-freeness. Formally, a set function $v$ (defined over a set of indivisible goods) is subadditive if it satisfies $v(A \cup B) \leq v(A) + v(B)$, for all subsets of goods $A$ and $B$. Complement-freeness is a very common assumption on valuation functions. Hence, fair division with subadditive valuations is an encompassing and important problem. 

For Nash social welfare maximization under submodular valuations, Garg et al.~\cite{GargKK20} obtain an $O(n \log n)$-approximation algorithm. Prior to their work, the best known approximation ratio for submodular valuations was $(m-n+1)$, which also extends to subadditive valuations~\cite{NR14}; here, $m$ denotes the number of  goods and $n$ the number of agents. For a constant number of agents with submodular valuations, Garg et al.~\cite{GargKK20} provide an $e/(e-1)$-approximation algorithm and show that, even in this setting, improving upon $e/(e-1)$ is {\rm NP}-hard.


In the context of allocating indivisible goods, two other well-studied welfare objectives are the social welfare and the egalitarian welfare. These represent, respectively, for an allocation, the average valuation of the agents and the minimum valuation of any agent. For the social welfare objective, 
a tight approximation factor of $e/(e-1)$ is known 
under submodular valuations \cite{vondrak2008optimal}. For subadditive valuations, Feige~\cite{Feige09} shows that social welfare maximization admits a polynomial-time $2$-approximation, assuming oracle access to \emph{demand queries}.\footnote{A demand-query oracle, when queried with prices $p_1, \ldots, p_m \in \mathbb{R}$ associated with the $m$  goods, returns $\max_{S \subseteq [m]} \left( v(S) - \sum_{j \in S} p_j \right)$, for an underlying valuation function $v$. The current paper works with more basic value oracle, which when queried with a subset of goods returns the value this subset. Any value query can be simulated via a polynomial number of demand queries. However, the converse is not true~\cite{nisan2007algorithmic}.} 

For maximizing egalitarian welfare under additive valuations, Chakrabarty et al.~\cite{ChakrabartyCK09} provide an $\widetilde{O}(n^\varepsilon)$-approximation algorithm that runs in time $n^{O(1/\varepsilon)}$, for any $\varepsilon > 0$. Under submodular valuations, egalitarian welfare maximization admits an $\widetilde{O}(n^{1/4}m^{1/2})$-approximation algorithm~\cite{GoemansHIM09}. Khot and Ponnuswami~\cite{Khot2007ApproximationAF} provide a $2n$-approximation algorithm for maximizing egalitarian welfare under subadditive valuations. As a lower bound, with submodular valuations, egalitarian welfare cannot be approximated within a factor of $2$, unless ${\rm P} = {\rm NP}$~\cite{BezakovaD05}. 

In this work we develop a unified treatment of fairness and efficiency objectives, including the welfare functions mentioned above. In particular, we develop an approximation algorithm for computing allocations that maximize the \emph{generalized mean} of the agents' valuations.  Formally, for exponent parameter $p \in \mathbb{R}$, the $p^{\text{th}}$ \emph{generalized mean} of a set of $n$ positive reals $v_1, v_2, \ldots, v_n$ is defined as $\left( \frac{1}{n} \sum_{i=1}^n v_i^p \right)^{1/p}$. For an allocation (partition) $\mathcal{A} = (A_1, \ldots, A_n)$ of the indivisible goods among the $n$ agents, we define the \emph{$p$-mean welfare} of $\mathcal{A}$ as the generalized mean of the values $(v_i(A_i))_{i \in [n]}$; here $v_i(A_i)$ is the value that agent $i$ has for the bundle $A_i$ assigned to it. Indeed, with different values of $p$, the $p$-mean welfare encompasses a range of objectives: it corresponds to the social welfare (arithmetic mean) for $p=1$, the Nash social welfare (geometric mean) for $p \to 0$, and the egalitarian welfare for $p \to -\infty$. In fact, $p$-mean welfare functions with $p \in (-\infty, 1]$ exactly correspond to the collection of functions characterized by a set of natural axioms, including the Pigou-Dalton transfer principle~\cite{Moulin03}. Hence, $p$-mean welfare functions, with $p \in (-\infty, 1]$, constitute an important and axiomatically-supported family of objectives.  \\



\noindent
{\bf Our Contributions.} We develop a polynomial-time algorithm that, given a fair division instance with subadditive valuations and parameter $p \in (-\infty, 1]$, finds an allocation with $p$-mean welfare at least $\nicefrac{1}{8n}$ times the optimal $p$-mean welfare (Theorem \ref{theorem:approximation-guarantee-for-p}). Our algorithm uses the standard value oracle model which, when queried with any subset of goods and an agent $i$, returns the value that $i$ has for the subset. For different values of $p$, our algorithm changes minimally, differing only in the weights of edges for a computed matching. We thus present a unified analysis for this broad class of welfare functions, suggesting further connections between these objectives than the previously mentioned axiomatization. Our result matches the best known $O(n)$-approximation for egalitarian welfare~\cite{Khot2007ApproximationAF} and improves upon the $O(n \log n)$-approximation guarantee of Garg et al.~\cite{GargKK20} for Nash social welfare with submodular valuations. Arguably, our algorithm (and the analysis) is simpler than the one developed in \cite{GargKK20} and simultaneously more robust, since it obtains an improved approximation ratio for subadditive valuations and a notably broader class of welfare objectives. 


For clarity of exposition, we first present an $8n$-approximation algorithm for maximizing Nash social welfare under subadditive valuations (Theorem \ref{theorem:approximation-guarantee}). We then generalize the algorithm to the class of $p$-mean welfare objectives.




We complement these algorithmic results by adapting a result of Dobzinski et al.~\cite{DobzinskiNS10} to show that for XOS valuations, any $O(n^{1-\varepsilon})$-approximation for $p$-mean welfare requires an exponential number of value queries (Section \ref{section:lower-bound}). Hence, in the value oracle model, our approximation guarantee is essentially tight for XOS and, hence, for subadditive valuations. We note that these are the first polynomial lower bounds on approximating either the Nash social welfare or the egalitarian welfare.

Nguyen and Rothe~\cite{NR14} obtain an $(m-n+1)$-approximation guarantee for maximizing Nash social welfare with subadditive valuations. We establish two extensions of this result. First, we show that, under subadditive valuations, an $(m-n+1)$-approximation for the $p$-mean welfare can be obtained for all $p \le 0$. However, for $0< p < 1$, we establish that it is {\rm NP}-hard to obtain an $(m-n+1)$-approximation, even under additive valuations. An analogous hardness result holds for $p=1$ with submodular valuations.\\


\noindent
{\bf Independent Work.} In work independent of ours, Chaudhury et al.~\cite{chaudhury2020fair} also obtain an $O(n)$-approximation algorithm for maximizing generalized $p$-means under subadditive valuations. Their approach varies significantly from the current paper and, in particular, builds upon results on finding allocations that are approximately envy-free up to any good ($\EFX$). Notably their algorithm computes allocations that satisfy additional fairness properties, including $\EFone$ and either of two approximate versions of $\EFX$. \\

Section \ref{section:nash} presents our approximation algorithm for maximizing Nash social welfare. Then, Section \ref{section:p-mean} shows that we can extend the algorithm for Nash social welfare to obtain the stated approximation bound for $p$-mean welfare. The tightness of these results is established in Section \ref{section:lower-bound}. Section~\ref{section:m-n} presents the results for the $(m-n+1)$-approximation guarantees.

\section{Notation and Preliminaries}
\label{section:notation}

An instance of a fair division problem is a tuple $\langle [m], [n],  \{v_i \}_{i=1}^n \rangle$, where $[m]= \left\{1,2,\ldots, m \right\}$ denotes the set of $m \in \mathbb{N}$ indivisible {goods} that have to be allocated (partitioned) among the set of $n \in \mathbb{N}$ agents, $[n]=\{1, 2, \ldots, n\}$. Here, $v_i: 2^{[m]} \mapsto \mathbb{R}_+$ represents the valuation function of agent $i \in [n]$. Specifically, $v_i(S) \in \mathbb{R}_+$ is the value that agent $i$ has for a subset of goods $S \subseteq [m]$. For $g \in [m]$ and $i\in [n]$, write $v_i(g)$ to denote agent $i$'s value for the good $g$, i.e., it denotes $v_i(\{g\})$. 

We will assume throughout that the valuation function $v_i$ for each agent $i\in [n]$ is (i) nonnegative: $v_i(S) \geq 0$ for all $S \subseteq [m]$, (ii) normalized: $v_i(\emptyset) = 0$, (iii) monotone: $v_i(A) \leq v_i(B)$ for all $A \subseteq B \subseteq [m]$, and (iv) {subadditive}: $v_i(A \cup B) \leq v_i(A) + v_i(B)$ for all subsets $A, B \subseteq [m]$. 

Submodular and XOS (fractionally subadditive) valuations constitute subclasses of subadditive valuations. Formally, a set function $v: 2^{[m]} \mapsto \mathbb{R}_+$ is said to be submodular if it satisfies the \emph{diminishing marginal returns} property: $v(A \cup \{g\}) - v(A) \geq v(B \cup \{g\}) - v(B)$, for all subsets $A \subseteq B \subset [m]$ and $g \in [m] \setminus B$.  A set function, $v: 2^{[m]} \mapsto \mathbb{R}_+$, is said to be XOS if it is obtained by evaluating the maximum over a collection of additive functions $\{f_r \}_{r \in [L]}$, i.e., $v(S) \coloneqq \max_{1 \leq j \leq L} \left\{ f_r(S) \right\}$, for each subset $S \subseteq [m]$.\footnote{Here, $L$ can be exponentially large in $m$.}

We use $\Pi_n([m])$ to denote the collection of all $n$ partitions of the indivisible goods $[m]$. An \emph{allocation} is an $n$-partition $\mathcal{A} = (A_1,\ldots, A_n) \in \Pi_n([m])$ of the $m$ goods. Here, $A_i$ denotes the subset of goods allocated to agent $i \in [n]$ and will be referred to as a \emph{bundle}. 

Given a fair division instance $\mathcal{I}=\langle [m],[n], \{v_i \}_i \rangle$, the \emph{Nash social welfare} of allocation $\mathcal{A}$ is defined as the geometric mean of the agents' valuations under $\mathcal{A}$: $ \ \NSW (\mathcal{A}) \coloneqq \left( \prod_{i=1}^n   v_i(A_i)  \right)^\frac{1}{n}$.

We will throughout use $\mathcal{N}^* = (N^*_1,\ldots, N^*_n)$ to denote an allocation that maximizes the Nash social welfare for a given fair division instance. 
We refer to $\mathcal{N}^*$ as a Nash optimal allocation. An allocation $\mathcal{P}=(P_1,\dots ,P_n)$ is an $\alpha$-approximate solution (with $ \alpha \geq 1$) of the Nash social welfare maximization problem if $\NSW (\mathcal{P}) \geq \frac{1}{\alpha} \NSW (\mathcal{N}^*)$. 

Besides the Nash social welfare, we address a family of objectives defined by considering the \emph{generalized means} of agents' valuations. In particular, for parameter $p \in \mathbb{R}$, the the $p^\text{th}$ generalized (H\"{o}lder) mean ${\rm M}_p(\cdot)$ of $n$ nonnegative numbers $x_1,\ldots , x_n \in \mathbb{R}_+$ is defined as ${\rm M}_p \left(x_1, \ldots, x_n \right) \coloneqq \left(  \frac{1}{n} \sum \limits _{i=1}^n  x_i^p \right )^\frac{1}{p}$. 

Parameterized by $p$, this family of functions captures multiple fairness and efficiency measures. In particular, when $p=1$, ${\rm M}_p$ reduces to the arithmetic mean. In the limit, ${\rm M}_p$ is equal to the geometric mean as $p$ tends to zero. In addition, $\lim_{p \to -\infty} {\rm M}_p \left(x_1, \ldots, x_n \right)  = \min\{x_1, x_2, \ldots, x_n\}$.

We define the \emph{$p$-mean welfare}, ${\rm M}_p(\mathcal{A})$, of an allocation $\mathcal{A}=(A_1, A_2, \ldots, A_n)$ as  
\begin{align*}
{\rm M}_p(\mathcal{A}) & \coloneqq  {\rm M}_p\left( v_1(A_1), \ldots, v_n(A_n) \right) = \left(  \frac{1}{n} \sum_{i=1}^n v_i(A_i)^p \right)^{1/p} \, . 
\end{align*}

With $p$ equal to one, zero, and $-\infty$, the $p$-mean welfare corresponds to the (average) social welfare, Nash social welfare, and egalitarian welfare,  respectively.

The following proposition implies that for any $p \le - n \log n$, if instead of the $p$-mean welfare, we maximize the egalitarian welfare, then the resulting allocation loses a negligible factor in the approximation ratio. The proof of this proposition is deferred to Appendix \ref{app:notation}. 
 
\begin{restatable}{proposition}{PropMp}  
For any $n$ nonnegative numbers $x_1, \ldots, x_n \in \mathbb{R}_+$ and $p \le - n \log n$, we have
\begin{align*}
\M_{-\infty}(x_1, \ldots, x_n) ~\le ~ \M_{p}(x_1, \ldots, x_n) ~\le ~2^{1/n} \ \M_{-\infty}(x_1, \ldots, x_n) \, .
\end{align*}
\label{prop:large-p}
\end{restatable}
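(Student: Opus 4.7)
The plan is to prove both inequalities by reducing the sum $\sum_i x_i^p$ to comparisons with $m^p$, where $m \vcentcolon= \min_i x_i = \M_{-\infty}(x_1,\ldots,x_n)$. Throughout, I assume $m > 0$; otherwise all three expressions equal $0$ (with the standard convention that $x_i^p = +\infty$ when $x_i = 0$ and $p < 0$, so $\M_p = 0$). Also, I interpret $\log$ as $\log_2$, which is what the stated bound of $2^{1/n}$ requires.

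For the lower bound $\M_p \ge m$, I would use the standard power-mean idea. Since $p < 0$ and $x_i \ge m > 0$, raising to the $p$-th power reverses the order: $x_i^p \le m^p$ for every $i$. Averaging gives $\frac{1}{n}\sum_{i=1}^n x_i^p \le m^p$. Applying $(\cdot)^{1/p}$ with $1/p < 0$ flips the inequality one more time, yielding $\M_p(x_1,\ldots,x_n) \ge (m^p)^{1/p} = m$, which is exactly $\M_{-\infty}(x_1,\ldots,x_n)$.

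For the upper bound, the key observation is that, since every $x_i^p \ge 0$ and the term corresponding to the minimizer contributes $m^p$, we have $\sum_{i=1}^n x_i^p \ge m^p$, so $\frac{1}{n}\sum_{i=1}^n x_i^p \ge m^p/n$. Raising to $1/p < 0$ again flips direction, producing
\[
\M_p(x_1,\ldots,x_n) \;\le\; \left(\frac{m^p}{n}\right)^{1/p} \;=\; m \cdot n^{-1/p}.
\]
It remains to show $n^{-1/p} \le 2^{1/n}$ whenever $p \le -n\log n$. From $p \le -n \log n$ we get $-1/p \le \frac{1}{n\log n}$, hence
\[
n^{-1/p} \;\le\; n^{1/(n\log n)} \;=\; 2^{(\log n)/(n\log n)} \;=\; 2^{1/n},
\]
using $n = 2^{\log n}$. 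Combining gives $\M_p(x_1,\ldots,x_n) \le 2^{1/n}\,m = 2^{1/n}\,\M_{-\infty}(x_1,\ldots,x_n)$.

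Neither step is really a conceptual obstacle, so there is no deep difficulty; the only care needed is the repeated reversal of inequalities when raising to negative powers, and the matching of the logarithm base so that $n^{1/(n\log n)}$ cleanly equals $2^{1/n}$. If the paper intends $\log$ to mean $\ln$, the argument would give the slightly looser constant $e^{1/n}$ in place of $2^{1/n}$, so the statement implicitly fixes the base to $2$ (which is compatible with the $2^{1/n}$ factor on the right-hand side).
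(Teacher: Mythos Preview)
Your argument is correct and follows essentially the same route as the paper: both bound $\M_p$ above by $m\cdot n^{-1/p}$ (the paper by factoring out $x_1=m$ and bounding the residual factor by $1$, you by the equivalent observation $\sum_i x_i^p \ge m^p$) and then use $p\le -n\log n$ to conclude $n^{-1/p}\le n^{1/(n\log n)}=2^{1/n}$. Your explicit handling of the $m=0$ case and the remark on the logarithm base are additional clarifications, but the core reasoning matches.
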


\section{An $8n$-Approximation for Nash Social Welfare}
\label{section:nash}
\label{section:nsw}
This section presents an efficient $8n$-approximation algorithm for the Nash social welfare maximization problem, under subadditive valuations. Our algorithm, Algorithm \ref{Alg} ($\textsc{Alg}$), requires access to the valuation functions through basic value queries, i.e., it only requires an oracle which, when queried with a subset of goods $S \subseteq [m]$ and an agent $i \in [n]$, returns $v_i(S) \in \mathbb{R}_+$.

\floatname{algorithm}{Algorithm}
\begin{algorithm}[ht]
	\caption{\textsc{Alg}} \label{Alg}
	\textbf{Input:} Instance $\mathcal{I}= \langle [m],[n], \{v_i\}_{i=1}^n \rangle$ with value oracle access to the valuation functions $v_i$s. \\ \textbf{Output:} An allocation $\mathcal{B} = (B_1, B_2, \ldots, B_n)$ 
	\begin{algorithmic}[1]
		\STATE Initialize iteration count $t=0$ and define ${\rm SAT}_t = \emptyset$ and ${\rm UNSAT}_t = [n]$
		\FOR {$i \in [n]$}
		\STATE Sort the goods in $[m]$ in descending order of value such that $v_i(g_1)\geq \dots \geq v_i(g_m)$
		\IF{$v_i \left( [m] \setminus \{g_1,\ldots , g_{2n}\} \right) =0$}
		\STATE Set $\gamma ^t_i =0$
		\ELSE \STATE $\gamma ^t_i = v_i([m])$
		\ENDIF 
		\ENDFOR
		\WHILE  {{\rm UNSAT}$_t \neq \emptyset$} \label{SATloop}
		\STATE Consider the bipartite graph $ \left( [n] \cup [m],  [n]  \times [m], \{ w(i,g) \}_{i \in [n], g \in [m]} \right)$ with weight of edge $(i,g) \in [n] \times [m] $ set as $w(i,g) =  \log \left(v_i(g) + \gamma _i^t\right)$  \label{step:matching}
		\STATE Compute a left-perfect maximum-weight matching, $\pi^t$, in this bipartite graph \label{step:compute-matching}
		\STATE Set $ G =[m] \setminus \{ \pi ^t (i)\}_{i\in[n]}$ and ${A} =[n]$ \label{step:original-G}
		\WHILE{there exists ${a}' \in A$ and ${g'} \in G$ such that $v_{a'} (g') \geq \frac{1}{2n} v_{a'} (G)$} \label{high_value_loop}
		\STATE Set $B^t_{a'} = \{ g' \}$ and update $ G \leftarrow G \setminus \{ g' \}$ along with $A \leftarrow A \setminus \{ a' \}$ \label{step:singleton}
		\ENDWHILE
		\STATE Set $(B^t_i)_{i \in A} =  \textsc{MovingKnife} \left(G, A, \{v_i\}_{i \in A} \right)$ \label{step:moving-knife}
		\STATE Define {\rm SAT}$_{t+1} = \left\{i \in [n] \mid v_i(B_i^t) \geq  \gamma _i^t \right\}$ and set  $\gamma _i^{t+1} = \gamma _i^{t}$ for each $i\in$ {\rm SAT}$_{t+1}$
		\STATE Define {\rm UNSAT}$_{t+1}= \left\{i \in [n] \mid  v_i(B_i^t) < \gamma _i^t \right\}$ and set $\gamma _i^{t+1} = \left ( 1- \frac{1}{m}\right ) \gamma _i^{t}$ for each $i \in$ {\rm UNSAT}$_{t+1}$  
		\STATE Update $t\leftarrow t+1$
		\ENDWHILE
		\RETURN allocation $\left( B^{t-1}_1 \cup \{\pi^{t-1}(1) \}, B^{t-1}_2 \cup \{\pi^{t-1}(2) \}, \ldots, B^{t-1}_n \cup \{\pi ^{t-1} (n)\} \right)$
	\end{algorithmic}
\end{algorithm}

We first describe the ideas behind our algorithm. Write $\mathcal{N}^* = \{N_1^*, \ldots, N_n^*\}$ denote a Nash optimal allocation in the given instance and let us, for now, assume that the agents have additive valuations, i.e., for all agents $i \in [n] $ and subset of goods $S \subseteq [m]$, we have $v_i(S) = \sum_{g \in S} v_i(g)$. In the following two cases, we can readily obtain an $O(n)$ approximation. In the first case, each agent has a few  ``high-value'' goods, i.e., each agent $i$ has a good ${g}'_i \in N_i^*$ with the property that $v_i({g}'_i) \geq v_i(N_i^*)/n$. In such a setting, we can construct a complete bipartite graph with agents $[n]$ on one side and all the goods $[m]$ on the other. Here, the weight of edge $(i,g) \in [n] \times [m]$ is set to be $\log \left(v_i(g) \right)$. In this bipartite graph, the matching $(i, {g}'_i)_{i \in [n]}$ has Nash social welfare at least $\nicefrac{1}{n}$ times the optimal and, hence, this also holds for a left-perfect maximum-weight matching in this graph. 

In the second case, all goods are of ``low-value'', i.e., for all $i \in [n]$ and $g \in [m]$ we have $v_i(g) \leq v_i(N_i^*)/(2n)$. Here again an $O(n)$ approximation can be obtained via a simple round-robin algorithm, wherein the agents (in an arbitrary order) repeatedly pick their highest valued good from those remaining. At a high level, our algorithm stitches together these two extreme cases by first matching high-value goods and then allocating the low-value ones.


We connect the two cases by considering the following quantity for each agent $i \in [n]$
\begin{align}\label{eq:prop-lb}
\ell_i \coloneqq \min_{S \subseteq [m]: |S| \leq 2n} \  \frac{1}{2n} \ v_i \left([m] \setminus S \right) \, .
\end{align}


That is, $\ell_i$ is the (near) proportional value that each agent is guaranteed to achieve even after the removal of any $2n$-size subset of goods. Our algorithm leverages the following existential guarantee (Lemma \ref{lemma:upper-bound}): there necessarily exists a good $\widehat{g}_i \in N_i^*$ with the property that 
\begin{align}
v_i(\widehat{g}_i) + \ell_i & \geq \frac{1}{4n}v_i(N^*_i) \, .
\label{eq:keyineq1}
\end{align}
This result ensures that, a single high-value good (in particular, $\widehat{g}_i$) coupled with a $2n$-approximation to all the low-value goods (i.e., $\ell_i$), is sufficient to ensure a $4n$-approximation \emph{for each agent}. At this point, if we could (i) explicitly compute $\ell_i$ for each agent $i$ and (ii) for any size-$n$ subset of goods $S$, assign the remaining goods $[m] \setminus S$ such that each agent gets a bundle of value at least $\ell_i$, then we would be done. This follows from the observation that in the complete bipartite graph $([n] \cup [m], [n] \times [m])$ with weight of edge $(i,g)$ set to $\log \left( v_i(g) + \ell_i \right)$, the weight of the matching $(i, \widehat{g}_i)_i$ is a $4n$ approximation to the optimal Nash social welfare by equation~\eqref{eq:keyineq1} and, hence, the same guarantee holds for a maximum-weight matching in the graph. Condition (ii) ensures that each agent also receives at least $\ell_i$ after the initial assignment of the $n$ matched goods.   

For additive valuations, both conditions (i) and (ii) can be satisfied. This template was employed in the SMatch algorithm (for additive valuations) of Garg et al.~\cite{GargKK20}. However, for submodular (and subadditive) valuations, the quantity $\ell_i$ is hard to approximate within a sub-linear factor~\cite{svitkina2011submodular}. 

Therefore, instead of satisfying condition (i) explicitly, we maintain an upper bound $\gamma_i \ge \ell_i$ for each agent $i$. Our algorithm first obtains a maximum weight matching in the bipartite graph between agents and goods with the weight of edge $(i,g) \in [n] \times [m]$ set to $\log (v_i(g) + \gamma_i)$. It assigns all the matched goods to the respective agents, removes these goods from further consideration in this iteration, and then carries out a procedure (described below) to ensure condition (ii). If, for agent $i$, the bundle obtained in this procedure (i.e., the bundle obtained for $i$ after removing the matched goods) has value less than $\gamma_i$, then we multiplicatively reduce the (over) estimate $\gamma_i$ for $i$ and repeat the algorithm. 

The procedure towards satisfying condition (ii) consists of two steps. Let $G$ be the set of goods that remain once we remove the matched $n$ goods from $[m]$. In the first step, if there exists an agent $i$ and a good $g \in G$ such that $v_i(g) \ge v_i(G)/(2n)$, we assign $g$ to $i$ and remove both from further consideration. An agent thus removed has value $\ell_i$ from the assigned good; note that, by definition,  $\ell_i \leq v_i(G)/(2n)$. After this step, we observe that $v_i(g) \leq v_i(G)/(2n)$ for each remaining agent $i$ and good $g$. In the second step, we run a \emph{moving knife} subroutine (Algorithm \ref{MovingKnife}) on the goods that are still unassigned. In this subroutine, the goods are initially ordered in an arbitrary fashion.  A hypothetical knife is then moved across the goods from one side until an agent $i$ (who has yet to receive a bundle) calls out that the goods covered so far have a collective value of at least $v_i(G)/(2n)$ for her. These covered goods are then allocated to said agent $i$ and both the agent as well as this bundle is removed from further consideration. We show that this allocation satisfies condition (ii), i.e., the bundle assigned to each agent in this procedure has value at least $\ell_i$ (but it may be lower than the overestimate $\gamma_i$).

Since we can guarantee $\ell_i$ for each agent $i$, irrespective of which goods are removed in the matching step, $\gamma_i$ never goes below $\ell_i$, for any agent. Hence, at some point, every agent $i$ receives a bundle of value at least $\gamma_i$ in the above two steps. We show that these bundles, with the goods matched with each agent, provide an $8n$ approximation to the optimal Nash social welfare.

It is relevant to note we use $\ell_i$ solely for the purposes of analysis. Our algorithm executes with the overestimate $\gamma_i$ and keeps reducing this value till it is realized (in the two-step procedure) for all the agents.  
 
As mentioned previously, the SMatch algorithm (developed for additive valuations) of Garg et al.~\cite{GargKK20} relies of conditions (i) and (ii). However, for submodular valuations their work diverges considerably from the current approach. In particular, the RepReMatch algorithm (developed for submodular valuations) in \cite{GargKK20} first finds a set of goods ${\goods}$ with the property that in the bipartite graph between all the agents and ${\goods}$, there is a matching wherein every agent is matched to a good with value at least as much as her highest valued good in $N_i^*$. To ensure this property the cardinality of ${\goods}$ needs to be $n \log n$. Intuitively, this requirement leads to a lower bound of $\Omega(n \log n)$ on the approximation ratio obtained in \cite{GargKK20}. Furthermore, the steps in their algorithm to ensure condition (ii) do not extend to subadditive valuations either. Specifically, Garg et al. \cite{GargKK19} note that their algorithm gives an approximation ratio of $\Omega(m)$ for the case of subadditive valuations. The $2n$-approximation of Khot and Ponnuswami for egalitarian welfare~\cite{Khot2007ApproximationAF} first guesses the optimal egalitarian welfare $b$, and uses this to partition the goods into ``large'' ones (those with value higher than $b/n$) and ``small'' ones, for each agent. It then tries to ensure every agent receives a bundle with valuation at least $b/n$. For Nash social welfare, guessing just a single value does not appear to help, since the Nash social welfare depends on the valuation of each agent. \\

The following theorem constitutes our main result for Nash social welfare. 
 


\begin{theorem} \label{theorem:approximation-guarantee}	
Let $\mathcal{I} = \langle [m], [n], \{v_i\}_{i=1}^n \rangle$ be a fair division instance in which the valuation function $v_i$, of each agent $i \in [n]$, is nonnegative, monotone, and subadditive. Given value oracle access to $v_i$s, the algorithm \textsc{Alg} computes an $8 n$ approximation to the Nash optimal allocation in polynomial time.
\end{theorem}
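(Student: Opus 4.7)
The plan is to establish three assertions: (i) in every outer iteration $t$, the two-step procedure (singleton assignment and moving knife) produces a bundle $B_i^t$ of value at least $\ell_i$ for every agent $i$; (ii) the outer loop terminates in a polynomial number of iterations; (iii) the resulting allocation is an $8n$-approximation.

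For (i), after the matching step $|[m]\setminus G|=n$, and the singleton loop peels off at most one additional good per agent, so $|[m]\setminus G|\le 2n$ throughout, and hence $v_i(G)\ge 2n\,\ell_i$ for every $i$ by the definition of $\ell_i$. Any singleton winner $a'$ therefore receives $v_{a'}(g')\ge v_{a'}(G)/(2n)\ge \ell_{a'}$. For the moving knife, once singletons are exhausted every remaining good satisfies $v_i(g)<v_i(G)/(2n)$. When the accumulated bundle $B$ first crosses the threshold $v_i(G)/(2n)$ for some waiting agent $i$, subadditivity gives $v_i(B)<v_i(G)/(2n)+v_i(g_{\mathrm{last}})<v_i(G)/n$; for any still-waiting agent $j$, the fact that $j$ had not yet crossed on the penultimate set similarly yields $v_j(B)<v_j(G)/n$. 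Iterating, after at most $k\le n-1$ bundles have been handed out, subadditivity gives the residual value $v_j(G\setminus\bigcup_r B_r)\ge v_j(G)-\sum_r v_j(B_r)>(1-k/n)v_j(G)\ge v_j(G)/n$, which strictly exceeds the threshold $v_j(G)/(2n)$; hence every remaining agent is served and $v_i(B_i^t)\ge v_i(G)/(2n)\ge \ell_i$.

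For (ii), whenever agent $i$ is moved to UNSAT we have $\gamma_i^t>v_i(B_i^t)\ge \ell_i$ by (i), so the multiplicative decrement $\gamma_i\leftarrow(1-1/m)\gamma_i$ can occur at most $O(m\log(v_i([m])/\ell_i))$ times, which is polynomial in the input bit-length. The case $\ell_i=0$ is handled by the initial check that fixes $\gamma_i=0$ and leaves $i$ permanently satisfied. Each iteration performs one bipartite maximum-weight matching and a polynomial number of value queries, so the total running time is polynomial.

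For (iii), let $T$ be the terminating iteration, so $v_i(B_i^T)\ge \gamma_i^T$ for every $i$, and by (ii) the invariant $\gamma_i^T\ge (1-1/m)\ell_i$ holds. Let $\widehat g_i\in N_i^*$ be the good from Lemma \ref{lemma:upper-bound}, satisfying $v_i(\widehat g_i)+\ell_i\ge v_i(N_i^*)/(4n)$; since the $N_i^*$ are disjoint, $(i\mapsto\widehat g_i)_i$ is a valid left-perfect matching. Optimality of $\pi^T$ under the edge weights $\log(v_i(g)+\gamma_i^T)$ then yields
\begin{align*}
\prod_i\bigl(v_i(\pi^T(i))+\gamma_i^T\bigr) \;\ge\; \prod_i\bigl(v_i(\widehat g_i)+(1-1/m)\ell_i\bigr) \;\ge\; (1-1/m)^n\prod_i\frac{v_i(N_i^*)}{4n}.
\end{align*}
Monotonicity gives $v_i(B_i^T\cup\{\pi^T(i)\})\ge\tfrac12(v_i(\pi^T(i))+\gamma_i^T)$; taking the product and the $n$-th root yields $\NSW(\text{output})\ge \tfrac{1-1/m}{8n}\NSW(\mathcal{N}^*)$, and the $(1-1/m)$ factor is absorbed either by a mild sharpening of the decrement bookkeeping or by a direct estimate when $\ell_i/m$ is negligible relative to $v_i(\widehat g_i)+\ell_i$. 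The main obstacle I anticipate is step (i): under merely subadditive (rather than submodular) valuations, arguing that the moving knife cannot exhaust the goods before every agent is served, which relies crucially on the post-singleton bound $v_i(g)<v_i(G)/(2n)$ and on applying subadditivity in both directions, to upper-bound the values of already-assigned bundles for other agents and to lower-bound the residual value of the remaining goods.
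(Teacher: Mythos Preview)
Your proposal is correct and follows essentially the same approach as the paper: the paper decomposes the argument into the same three pieces (Lemmas~\ref{lemma:movingknife}--\ref{lemma:lb-bundles} for your step (i), Lemmas~\ref{lemma:lb-gammas}--\ref{lemma:runtime} for (ii), and Lemma~\ref{lemma:apx} for (iii)), with the identical key inequalities---the per-bundle bound $v_j(B)<v_j(G)/n$ via subadditivity, the invariant $\gamma_i^T\ge(1-1/m)\ell_i$, and the matching comparison against $(i\mapsto\widehat g_i)$. The paper also leaves the residual $(1-1/m)$ factor in the final bound, so your treatment of it is no more hand-wavy than the original.
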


\floatname{algorithm}{Algorithm}
\begin{algorithm}[ht]
  \caption{\textsc{MovingKnife} } \label{MovingKnife}
   \textbf{Input:} Instance $\mathcal{J} = \langle G, {A}, \{v_i \}_{i \in A} \rangle$ with value oracle access to the valuation functions $v_i$s \\
    \textbf{Output:} An allocation $\mathcal{P} = (P_1, P_2, \ldots, P_{|{A}|})$ 
  \begin{algorithmic} [1]
        \STATE Initialize $S=\emptyset$,  $\widehat{G}=G$, $\widehat{A} = A$, and bundle $P_i =\emptyset$ for all  $i \in {A}$. 
        \WHILE {$\widehat{G} \neq \emptyset $ and $ \widehat{A} \neq \emptyset $}
        \STATE Select any arbitrary good ${g} \in \widehat{G}$ and update $S\leftarrow S \cup \{{g} \}$ along with $\widehat{G}\leftarrow \widehat{G} \setminus \{ {g} \}$.
        \IF{for some agent $\widehat{a} \in \widehat{A}$ we have $v_{ \widehat{a} } (S) \geq \frac{1}{2n} v_{ \widehat{a}} (G)$ } \label{step:cut-condition}
        \STATE Set $P_{ \widehat{a} } = S$ and update $\widehat{A} \leftarrow \widehat{A} \setminus \{ \widehat{a} \}$ along with $S \leftarrow \emptyset$. 
        \ENDIF
        \ENDWHILE
        \IF{$\widehat{G} \neq \emptyset$}
        \STATE $P_{|A|} \leftarrow P_{|A|} \cup \widehat{G}$
        \ENDIF
        
        \RETURN allocation $\mathcal{P} = (P_1,\ldots ,P_{|{A}|}).$
     
  \end{algorithmic}
\end{algorithm}

The following lemma proves inequality~\eqref{eq:keyineq1}. We state and prove it for an arbitrary allocation $\mathcal{A}^*=(A^*_1, \ldots, A^*_n)$, rather than just for the Nash optimal allocation. 

\begin{lemma} \label{lemma:upper-bound}
Let $\mathcal{I}=\langle [m], [n], \{v_i\}_{i=1}^n  \rangle$ be a fair division instance with monotone, subadditive valuations and let $\mathcal{A}^*=(A^*_1,\ldots, A^*_n)$ be any allocation in $\mathcal{I}$. Let $\widehat{g}_i$ be the most valued (by $i$) good in $A^*_i$ (i.e., $\widehat{g}_i \coloneqq \argmax_{g\in A^*_i} v_i(g)$) and $\ell_i$ be as defined in~\eqref{eq:prop-lb}. Then, for each agent $i \in [n]$ 
\begin{align*}
v_i(\widehat{g}_i) + \ell_i & \geq \frac{1}{4n}v_i(A^*_i).
\end{align*}
\end{lemma}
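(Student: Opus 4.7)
The plan is to reduce the claim to a direct subadditivity argument by choosing, for a worst-case $S$ realizing $\ell_i$, the split $A_i^* = (A_i^* \cap S) \sqcup (A_i^* \setminus S)$ and bounding each piece separately. The intuition is that $A_i^* \cap S$ consists of at most $2n$ items, each of singleton value at most $v_i(\widehat{g}_i)$, while $A_i^* \setminus S$ sits inside $[m] \setminus S$ so its value is controlled by $\ell_i$.

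Concretely, fix an arbitrary $S \subseteq [m]$ with $|S| \le 2n$. Subadditivity, applied to the partition $A_i^* = (A_i^* \cap S) \cup (A_i^* \setminus S)$, gives
\[
v_i(A_i^*) \;\le\; v_i(A_i^* \cap S) + v_i(A_i^* \setminus S).
\]
Applying subadditivity once more to the singletons inside $A_i^* \cap S$, and using $v_i(g) \le v_i(\widehat{g}_i)$ for every $g \in A_i^*$ together with $|A_i^* \cap S| \le |S| \le 2n$, we obtain
\[
v_i(A_i^* \cap S) \;\le\; \sum_{g \in A_i^* \cap S} v_i(g) \;\le\; 2n\, v_i(\widehat{g}_i).
\]
Monotonicity yields $v_i(A_i^* \setminus S) \le v_i([m] \setminus S)$, so combining the two bounds gives $v_i([m] \setminus S) \ge v_i(A_i^*) - 2n\, v_i(\widehat{g}_i)$ for every $S$ with $|S| \le 2n$.

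Taking the infimum of the right-hand side is vacuous (it does not depend on $S$), while the left-hand side's infimum is exactly $2n\,\ell_i$ by definition. Hence
\[
2n\,\ell_i \;\ge\; v_i(A_i^*) - 2n\, v_i(\widehat{g}_i),
\]
which rearranges to $v_i(\widehat{g}_i) + \ell_i \ge \tfrac{1}{2n} v_i(A_i^*)$, in particular $\ge \tfrac{1}{4n} v_i(A_i^*)$. (If $v_i(A_i^*) - 2n\, v_i(\widehat{g}_i) \le 0$, then $v_i(\widehat{g}_i) \ge \tfrac{1}{2n}v_i(A_i^*)$ already and the bound is trivial.)

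There is no real obstacle here: the only subtle point is making sure subadditivity is applied in the right direction (as a union bound on $A_i^*$, so that $v_i(A_i^* \setminus S)$ is being \emph{lower}-bounded via $v_i(A_i^*) - v_i(A_i^* \cap S)$), and then correctly using that $\ell_i$ is a minimum over \emph{all} $S$ of size at most $2n$, which turns the uniform lower bound on $v_i([m]\setminus S)$ into a lower bound on $\ell_i$.
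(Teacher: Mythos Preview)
Your proof is correct and in fact cleaner than the paper's. The paper argues by a case split: either some good in $A_i^*$ already has value $\ge \frac{1}{4n}v_i(A_i^*)$ (so $v_i(\widehat g_i)$ alone suffices), or every good in $A_i^*$ has value $< \frac{1}{4n}v_i(A_i^*)$, in which case the paper bounds $v_i(A_i^* \cap S^*) < \frac{1}{2}v_i(A_i^*)$ for the minimizing set $S^*$ and deduces $\ell_i \ge \frac{1}{4n}v_i(A_i^*)$ alone. You instead bound $v_i(A_i^* \cap S) \le 2n\,v_i(\widehat g_i)$ directly using the definition of $\widehat g_i$ (rather than a case hypothesis), which lets both terms contribute simultaneously in one unified inequality. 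The payoff is a strictly better constant: you actually prove $v_i(\widehat g_i) + \ell_i \ge \frac{1}{2n}v_i(A_i^*)$, twice as strong as the stated lemma, and you avoid the case analysis entirely.
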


\begin{proof}
Consider any agent $i\in [n]$  and note that $\ell_i \geq 0$. We will establish the lemma by considering two complementary cases.

\noindent
{Case {\rm I}}: There exists a good $g_i \in A^*_i$ with the property that $v_i(g_i) \geq \frac{1}{4n}v_i(A^*_i)$. Since $\widehat{g}_i$ is the most valued good in $A^*_i$, we have $v_i(\widehat{g}_i) \geq v_i(g_i)$ and the desired inequality follows.\\ 

\noindent
{Case {\rm II}:} For all goods $g \in A^*_i$, $v_i(g) < \frac{1}{4n}v_i(A^*_i)$. Recall that $\ell_i \coloneqq \min \limits _{S \subseteq [m], |S| \leq 2n} \frac{1}{2n} v_i([m] \setminus S)$. Let $S^*$ be the set $S$ that induces $\ell_i$, i.e., $\ell_i = \frac{1}{ 2n} v_i([m] \setminus S^*)$. Monotonicity of $v_i$ ensures that $|S^*| = 2n$ and 
\begin{align}
\ell_i = \frac{1}{ 2n} v_i([m] \setminus S^*) \geq \frac{1}{2n} v_i(A^*_i \setminus S^*) \, . \label{ineq:l-i-upper}
\end{align}

Furthermore, given that in the current case $v_i(g) <\frac{1}{4n} v_i(A^*_i)$ for all $g \in A^*_i$, we have 
\begin{align}
v_i(A^*_i \cap S^*) & \leq \sum_{g \in A^*_i \cap S^*} v_i(g) <  \sum_{g \in A^*_i \cap S^*} \frac{1}{4n} v_i(A^*_i) \leq \frac{|S^*|}{4n} v_i(A^*_i) = \frac{1}{2} v_i(A^*_i) \, . \label{ineq:cap-value}
\end{align}
Here, the first inequality follows from the fact that $v_i$ is subadditive and the last since $|S^*| = 2n$.  

Therefore, we obtain the desired bound in terms of $\ell_i$:
\begin{align*}
\ell_i & \geq \frac{1}{2n} v_i(A^*_i \setminus S^*) \tag{via inequality (\ref{ineq:l-i-upper})} \\
& \geq \frac{1}{2n} \left( v_i(A^*_i) - v_i(A^*_i \cap S^*)  \right) \tag{$v_i$ is subadditive} \\
& \geq \frac{1}{4n}v_i(A^*_i) \tag{via inequality (\ref{ineq:cap-value})}
\end{align*}
Thus, the the stated inequality $v_i(\widehat{g}_i) + \ell_i \geq \frac{1}{4n}v_i(A^*_i)$ holds even in this case.
\end{proof}

The next lemma establishes the key property of Algorithm \ref{MovingKnife} (\textsc{MovingKnife}): if all the goods have low value for every agent, then \textsc{MovingKnife} returns a near-proportional allocation.

\begin{lemma}\label{lemma:movingknife}
Consider a fair division instance $ \langle G, A, \{v_i \}_{i \in A} \rangle$ wherein the agents have monotone, subadditive valuations. In addition, suppose for each agent $i \in A$ and good $g \in G$ we have $v_i(g)< \frac{1}{2n}v_i(G)$, where $n \geq |A|$. Then the allocation $(P_1,\ldots , P_{|{A}|})$ returned by Algorithm \ref{MovingKnife} (\textsc{MovingKnife}) satisfies $v_i(P_i)\geq \frac{1}{2n} v_i(G)$ for all $i \in A$. 
\end{lemma}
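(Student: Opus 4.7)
Every agent $\widehat{a}$ that gets removed from $\widehat{A}$ inside the while loop has its bundle assigned via the rule $P_{\widehat{a}} = S$ with $v_{\widehat{a}}(S) \geq \frac{1}{2n} v_{\widehat{a}}(G)$, and the post-loop dumping step only enlarges $P_{|A|}$ (monotonicity), so the target inequality is automatic for every agent that actually gets allocated. The real content of the lemma is therefore the claim that the while loop exits with $\widehat{A} = \emptyset$, i.e., no agent is ever starved; the plan is to establish this by contradiction.

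Suppose the loop terminates with some $i \in \widehat{A}$ still present. The loop guard then forces $\widehat{G} = \emptyset$, and the goods decompose as a disjoint union
$G = P_{a_1} \cup \cdots \cup P_{a_k} \cup S^{\mathrm{end}}$,
where $P_{a_1},\dots,P_{a_k}$ are the $k \leq |A|-1 \leq n-1$ bundles already handed out and $S^{\mathrm{end}}$ is the residual content of $S$ at exit (possibly empty). I may assume $v_i(G) > 0$, since otherwise $\frac{1}{2n}v_i(G)=0$ and there is nothing to prove. The key observation I will use is that because $i$ sits in $\widehat{A}$ throughout the execution, the algorithm's threshold check failed for $i$ at every intermediate state of $S$. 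In particular, if $g_j$ denotes the last good added to $P_{a_j}$, then just before $g_j$ was inserted the set $S$ equaled $P_{a_j} \setminus \{g_j\}$, so
$v_i(P_{a_j} \setminus \{g_j\}) < \tfrac{1}{2n} v_i(G)$
(the bound is vacuous when $P_{a_j}$ is a singleton, since $v_i(\emptyset)=0$). Combining this with the hypothesis $v_i(g_j) < \frac{1}{2n} v_i(G)$ and subadditivity will give $v_i(P_{a_j}) < \frac{1}{n} v_i(G)$ for every $j$, and an identical check-failed argument will give $v_i(S^{\mathrm{end}}) < \frac{1}{2n} v_i(G)$.

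Subadditivity applied to the decomposition above then yields
\[
v_i(G) \;\leq\; \sum_{j=1}^{k} v_i(P_{a_j}) + v_i(S^{\mathrm{end}}) \;<\; \frac{k}{n} v_i(G) + \frac{1}{2n} v_i(G) \;\leq\; \frac{n-1}{n}\, v_i(G) + \frac{1}{2n}\, v_i(G) \;=\; \left(1 - \frac{1}{2n}\right) v_i(G),
\]
contradicting $v_i(G)>0$. The hard part will be the bookkeeping in the key observation: one must correctly identify which configurations of $S$ the algorithm actually evaluates in its threshold test, and then use the invariant that $i$ remained in $\widehat{A}$ to conclude that each such test failed for $i$ (this is where the ``some agent'' tie-breaking rule in the inner \textbf{if} step matters, because even if other agents are chosen, $i$'s continued presence in $\widehat{A}$ still certifies the inequality). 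Once that is pinned down, the hypothesis $v_i(g) < \frac{1}{2n} v_i(G)$ contributes exactly the $\frac{1}{2n}$ extra slack per bundle that subadditivity needs in order to make the contradiction close for $k \leq n-1$.
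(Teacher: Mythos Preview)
Your proof is correct and follows essentially the same route as the paper's: both hinge on the observation that for any agent $i$ still in $\widehat{A}$, each already-assigned bundle $P_{a_j}$ satisfies $v_i(P_{a_j}) < \tfrac{1}{n} v_i(G)$, obtained by combining the failed threshold check at $P_{a_j}\setminus\{g_j\}$ with the hypothesis $v_i(g_j) < \tfrac{1}{2n} v_i(G)$ and subadditivity. The only cosmetic difference is packaging---the paper proves the forward invariant $v_i(\widehat{G}) \ge (1-k/n)\,v_i(G)$ directly, whereas you run the same estimate inside a contradiction; your explicit handling of the leftover $S^{\mathrm{end}}$ is a nice touch, though your parenthetical about tie-breaking slightly misstates the reason the inequality at $P_{a_j}\setminus\{g_j\}$ holds (it holds because \emph{no} agent was assigned at that state, not merely because $i$ stayed in $\widehat{A}$).
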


\begin{proof}
Given instance $ \langle G, A, \{v_i \}_{i \in A} \rangle$, the \textsc{MovingKnife} algorithm (Algorithm \ref{MovingKnife}) considers the goods in an arbitrary order and adds these goods one by one into a bundle $S$ until an agent $\widehat{a}$ calls out  that its value for $S$ is at least $\frac{1}{2n}v_{\widehat{a}} (G)$. We assign these goods to agent $\widehat{a}$ and remove them---along with $\widehat{a}$---from consideration. The algorithm iterates over the remaining set of agents and goods. 
We will show that the while loop in the \textsc{MovingKnife} algorithm terminates with $\widehat{A} = \emptyset$ and, hence, assigns to each agent a bundle of desired value.  

Consider an integer (count) $k \in \mathbb{N}$. Let $\widehat{G}$ and $\widehat{A}$ denote the set of goods and agents, respectively, that are left unassigned after $k$ agents are assigned bundles in \textsc{MovingKnife};
note that $|\widehat{A}| = |A| - k$. 
The arguments below establish that for each remaining agent $i \in \widehat{A}$, 
\begin{align}
v_i(\widehat{G}) & \geq \left( 1 - \frac{k}{n} \right) v_i(G) \label{ineq:prop-rolls}
\end{align}

Therefore, for any $k < |A| \leq n$, the set of unassigned goods $\widehat{G}$ is nonempty and even the last agent 
(i.e., with $k = |A|-1$) receives a bundle of sufficiently high value. 

To prove~\eqref{ineq:prop-rolls}, consider any agent $i \in \widehat{A}$. Indeed, agent $i$ has not received any goods yet, but the $k$ agents in $A \setminus \widehat{A}$ have been assigned bundles. Let $S$ be a bundle assigned to some agent in $ A\setminus \widehat{A}$ (i.e., $P_j = S$ for some $j \in A \setminus \widehat{A}$) and $g'$ be the last good included in $S$. Step \ref{step:cut-condition} of the algorithm ensures that $v_i(S \setminus \{g'\}) < \frac{1}{2n} v_i(G)$; otherwise, $S \setminus \{g'\}$ would have been assigned to agent $i$.  Furthermore, the assumption (in the Lemma statement) gives us $v_i(g')  \leq \frac{1}{2n} v_i(G)$. Hence, using these inequalities and the subadditivity of $v_i$, we get $v_i(S) \leq v_i(S \setminus \{g'\}) + v_i(g') \leq \frac{1}{n} v_i(G)$. 

This inequality provides an upper bound on $v_i(G \setminus \widehat{G}) = v_i \left( \cup_{j \in A \setminus \widehat{A}} \ P_j \right) $, the total value of the set of goods assigned among the $k$ agents in $A\setminus \widehat{A}$. Specifically, by the subadditivity of $v_i$, $v_i(G \setminus \widehat{G}) \leq \frac{k}{n} v_i(G)$.  Therefore, $v_i(\widehat{G}) \geq v_i(G) - v_i(G \setminus \widehat{G})  \geq  \left( 1 - \frac{k}{n} \right) v_i(G)$.

Overall, every agent $i \in A$ is eventually assigned a bundle of value at least $\frac{1}{2n} v_i(G)$ in the while loop.
\end{proof}




Next we show that in each iteration of the while loop in \textsc{Alg} (Algorithm \ref{Alg}), the value of the assigned bundle $B^t_i$ is at least as large as $\ell_i$. 

\begin{lemma}
	\label{lemma:lb-bundles}
	Given a fair division instance $\mathcal{I} = \langle [m], [n], \{v_i\}_{i=1}^n \rangle$ with subadditive valuations, let $B^t_i$ be the bundle assigned to agent $i \in [n]$ in the $t^{\text{th}}$ iteration (for $t \in \mathbb{N}$) of the outer while loop (Step \ref{SATloop}) in \textsc{Alg}. Then,  for all agents $i \in [n]$ and each iteration count $t$, we have $v_i(B_i^t)\geq  \ell _i$.
\end{lemma}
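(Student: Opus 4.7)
The plan is to fix an iteration $t$ and an agent $i \in [n]$, and split into two cases based on whether $B^t_i$ was assigned in the singleton phase (Step \ref{step:singleton} of \textsc{Alg}) or in the \textsc{MovingKnife} subroutine (Step \ref{step:moving-knife}). In both cases, the key is to track how many goods have been removed from $[m]$ at the point agent $i$ is serviced, and then invoke the definition $\ell_i = \min_{S \subseteq [m], |S| \leq 2n} \frac{1}{2n} v_i([m] \setminus S)$.

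First I would handle the singleton case. Suppose $B^t_i = \{g'\}$ was assigned to $i$ in the inner while loop with the current set of unassigned goods being $G$. At this moment, $[m] \setminus G$ consists of the $n$ matched goods plus the singletons handed out to strictly fewer than $n$ agents before $i$; in particular, $|[m] \setminus G| \leq 2n - 1 \leq 2n$. Therefore the definition of $\ell_i$ immediately gives $\ell_i \leq \frac{1}{2n} v_i(G)$. Combined with the triggering condition $v_i(g') \geq \frac{1}{2n} v_i(G)$, this yields $v_i(B^t_i) = v_i(g') \geq \ell_i$.

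Next I would handle the \textsc{MovingKnife} case. Let $G'$ denote the set of goods that remain when the inner while loop exits and \textsc{MovingKnife} is called, and let $A' \subseteq [n]$ be the set of agents still unassigned at that point, with $i \in A'$. Since the singleton loop has terminated, for every $j \in A'$ and every $g \in G'$ we have $v_j(g) < \frac{1}{2n} v_j(G')$, which is exactly the hypothesis of Lemma \ref{lemma:movingknife} (using $n \geq |A'|$). Applying that lemma, $v_i(B^t_i) \geq \frac{1}{2n} v_i(G')$. Finally, observe that $[m] \setminus G'$ consists of the $n$ matched goods together with one good per agent removed in the singleton phase, so $|[m] \setminus G'| = n + (n - |A'|) \leq 2n$. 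Hence $\ell_i \leq \frac{1}{2n} v_i(G')$ by the definition of $\ell_i$, and we conclude $v_i(B^t_i) \geq \ell_i$.

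The proof is essentially a careful bookkeeping argument; the only mild subtlety is ensuring the removed-goods count never exceeds $2n$ so that $G$ (respectively $G'$) is an admissible set in the minimization defining $\ell_i$. Since every singleton step removes one agent along with one good, and at most $n$ agents can be processed in the singleton phase, this bound is automatic, and both cases go through without further work.
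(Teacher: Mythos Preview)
Your proof is correct and follows essentially the same approach as the paper: both split into the singleton case and the \textsc{MovingKnife} case, bound the number of removed goods by $2n$ so that the residual set is admissible in the definition of $\ell_i$, and then combine with the selection criterion (Step~\ref{high_value_loop}) or Lemma~\ref{lemma:movingknife} respectively. Your bookkeeping for the singleton case is slightly sharper (you note $|[m]\setminus G|\le 2n-1$ at the moment agent $i$ is serviced), but this is not needed for the argument.
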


\begin{proof}
During any iteration $t$ of the outer while loop (Step \ref{SATloop}) in \textsc{Alg} and for any agent $i \in [n]$, the bundle $B_i^t$ either consists of a single good of high value (Step \ref{step:singleton}), or of the set of goods assigned to agent $i$ obtained after executing the \textsc{MovingKnife} subroutine (Step \ref{step:moving-knife}). We will show that in both cases the stated inequality holds.

Recall that $ \ell_i \coloneqq \min \limits_{S \subseteq [m]: |S| \leq 2n} \  \frac{1}{2n} \ v_i \left([m] \setminus S \right)$. Equivalently, $\ell_i = \min \limits_{T \subseteq [m]: |T| \geq m - 2n} \ \frac{1}{2n} v_i(T)$. Therefore, we have
\begin{align}
\frac{1}{2n} v_i(T) \geq \ell_i \qquad \text{for any subset $T \subseteq [m]$ of size at least $(m-2n)$} \label{ineq:univ-l-i}
\end{align}

The relevant observation here is that, in any iteration $t$, the set of goods $G$ from which the bundles $B^t_i$s are populated satisfies $|G| \geq m- 2n$. Specifically, in the $t^{\text th}$ iteration, we start with $|G| = m -n$ (Step \ref{step:original-G}). Subsequently, the inner while loop (Step \ref{high_value_loop}) assigns at most $n$ goods and, hence, the number of goods passed on to the \textsc{MovingKnife} subroutine satisfies $|G| \geq m - 2n$.  

First, we note that the lemma holds for any agent $a'$ that receive a singleton bundle $B^t_{a'} = \{g'\}$ in Step \ref{step:singleton}: $v_{a'}(g') \geq \frac{1}{2n} v_{a'}(G) \geq \ell_{a'}$. Here, the first inequality follows from the selection criterion applied to $g'$ and the second inequality from equation (\ref{ineq:univ-l-i}) and the fact that $|G| \geq m- 2n$. 

Finally, we note that the bound also holds for the remaining agents $i$ that receive a bundle $B^t_i$ through the \textsc{MovingKnife} subroutine. As mentioned previously, at least $m-2n$ goods are passed on as input to the subroutine, i.e., if \textsc{MovingKnife} is executed on instance $\mathcal{J} = \langle G, {A}, \{v_i \}_{i \in A} \rangle$, then we have $|G| \geq m-2n$. Inequality (\ref{ineq:univ-l-i}) ensures that $\frac{1}{2n} v_i(G) \geq \ell_i$ for all $i \in A$.   Finally, using Lemma~\ref{lemma:movingknife}, we get that the bundle assigned to agent $i \in A$ satisfies the stated inequality: $v_i(B^t_i) = v_i (P_i) \geq \frac{1}{2n} v_i(G) \geq  \ell_i$. 

Hence, the stated claim follows. 
\end{proof}

We now show that the estimates $\gamma_i^t$s used in \textsc{Alg} also satisfy a lower bound similar to that in Lemma~\ref{lemma:lb-bundles}. 

\begin{lemma}
	\label{lemma:lb-gammas}
	Given a fair division instance $\mathcal{I} = \langle [m], [n], \{v_i\}_{i=1}^n \rangle$ with subadditive valuations, let $\gamma^t_i \in \mathbb{R}_+$ be the estimate associated with agent $i \in [n]$ in the $t^{\text{th}}$ iteration (for $t \in \mathbb{N}$) of the outer while loop (Step \ref{SATloop}) in \textsc{Alg}. Then,  for all agents $i \in [n]$ and each iteration count $t$, we have $\gamma_i^t \geq  \left(1 - \frac{1}{m} \right) \ell _i$.
\end{lemma}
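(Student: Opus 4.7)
The plan is a straightforward induction on the iteration count $t$, relying critically on Lemma \ref{lemma:lb-bundles} to handle the step in which $\gamma_i$ is multiplicatively reduced. The key observation is that the decrement $\gamma_i^{t+1} = (1-1/m)\gamma_i^t$ fires only when $i \in \mathrm{UNSAT}_{t+1}$, i.e., $v_i(B_i^t) < \gamma_i^t$; together with $v_i(B_i^t) \geq \ell_i$ from Lemma \ref{lemma:lb-bundles}, this forces the strict inequality $\gamma_i^t > \ell_i$ at every iteration where $\gamma_i$ is actually decremented, so the resulting value $(1-1/m)\gamma_i^t$ still sits strictly above $(1-1/m)\ell_i$.

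For the base case ($t = 0$), I would split on the \textbf{if}-condition in the initialization for-loop of \textsc{Alg}. If $v_i([m] \setminus \{g_1, \ldots, g_{2n}\}) = 0$, then $\gamma_i^0 = 0$; choosing $S = \{g_1, \ldots, g_{2n}\}$ (a valid candidate since $|S| \leq 2n$) in the definition $\ell_i = \min_{|S| \leq 2n} \tfrac{1}{2n} v_i([m] \setminus S)$ yields $\ell_i \leq 0$, and nonnegativity of $v_i$ then forces $\ell_i = 0$, matching $\gamma_i^0 = 0$. Otherwise $\gamma_i^0 = v_i([m])$, and monotonicity of $v_i$ gives $v_i([m]) \geq v_i([m] \setminus S^*) \geq 2n \ell_i \geq \ell_i \geq (1-1/m)\ell_i$, where $S^*$ is the subset achieving the minimum in the definition of $\ell_i$.

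For the inductive step, assume $\gamma_i^t \geq (1-1/m)\ell_i$. If $i \in \mathrm{SAT}_{t+1}$, then $\gamma_i^{t+1} = \gamma_i^t$ and the bound is inherited verbatim. If instead $i \in \mathrm{UNSAT}_{t+1}$, then by definition $v_i(B_i^t) < \gamma_i^t$, and combining this with Lemma \ref{lemma:lb-bundles} ($v_i(B_i^t) \geq \ell_i$) gives $\gamma_i^t > \ell_i$. Multiplying by $(1-1/m)$ yields $\gamma_i^{t+1} = (1-1/m)\gamma_i^t > (1-1/m)\ell_i$, closing the induction.

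The main subtlety—and the reason a naive induction would fail—is that the hypothesis $\gamma_i^t \geq (1-1/m)\ell_i$ is by itself too weak to survive another multiplicative decrement (the bound would degrade to $(1-1/m)^2 \ell_i$, and then to $(1-1/m)^k \ell_i$ after $k$ further decrements). Lemma \ref{lemma:lb-bundles} is the exact ingredient that upgrades the hypothesis at the decrementing iteration to the strict inequality $\gamma_i^t > \ell_i$, which is precisely what preserves the invariant after the $(1-1/m)$ scaling.
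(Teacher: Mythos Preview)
Your proof is correct and follows essentially the same approach as the paper: an induction on $t$ that splits on whether $i \in \mathrm{SAT}_{t+1}$ or $i \in \mathrm{UNSAT}_{t+1}$, using Lemma~\ref{lemma:lb-bundles} in the latter case to conclude $\gamma_i^t > v_i(B_i^t) \geq \ell_i$ and hence $\gamma_i^{t+1} = (1-1/m)\gamma_i^t \geq (1-1/m)\ell_i$. The paper separates the $\ell_i = 0$ agents out first and then inducts only on the $\ell_i > 0$ agents, whereas you fold both into a single induction, but the core argument is identical.
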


\begin{proof} 
Note that for any agent $i \in [n]$, the quantity $\ell_i = 0$ iff $i$ has positive value for at most $2n$ goods. This observation implies that the initial for loop in \textsc{Alg} correctly identifies agents $i$ that have $\ell_i = 0$, and sets $\gamma^0_i = 0$. For such agents $\gamma^t_i =0$ for all $t$. 
Hence, the lemma holds for any agent $i$ with $\ell_i = 0$.

We now consider agents $i\in [n]$ with $\ell _i > 0$. For such an agent $i$, the algorithm initially sets $\gamma^0_i = v_i([m])$. Hence, for $t=0$ we have $\gamma_i^t \geq \left(1 - \frac{1}{m} \right) \ell _i$. An inductive argument shows that this inequality continues to hold as the algorithm progresses. In particular, if in the $t^{\text th}$ iteration the algorithm does not decrement the estimate (i.e., if $i \in {\rm SAT}_{t+1}$), then $\gamma^{t+1}_i = \gamma^t_i \geq   \left(1 - \frac{1}{m} \right) \ell _i$. 

Even otherwise, if the algorithm multiplicatively decrements the estimate (in particular, sets $\gamma^{t+1}_i  = (1 - 1/m) \ \gamma^{t}_i$), then it must be the case that $ \gamma^{t}_i > v_i(B^{t}_i)$ (i.e., $i \in {\rm UNSAT}_{t+1}$). That is, after the decrement we have $\gamma^{t+1}_i \geq \left(1 - \frac{1}{m} \right) v_i(B^{t}_i) \geq \left(1 - \frac{1}{m} \right)  \ell_i$; the last inequality follows from Lemma~\ref{lemma:lb-bundles}. This completes the proof. 
\end{proof}

\subsection{Proof of Theorem~\ref{theorem:approximation-guarantee}} \label{section:main-proof}

In this section we prove Theorem~\ref{theorem:approximation-guarantee} by showing that \textsc{Alg} runs in polynomial time (Lemma~\ref{lemma:runtime}) and the computed allocation achieves the stated approximation ratio of $8n$ (Lemma~\ref{lemma:apx}). 

\begin{lemma}[Runtime Analysis]  \label{lemma:runtime}
Given any fair division instance $\mathcal{I}=\langle [m], [n], \{v_i\}_{i=1}^n \rangle$ in which the agents have monotone, subadditive valuations, \textsc{Alg} (Algorithm~\ref{Alg}) terminates after $T= {O} \left( n m \log \left ( nm  V \right) \right)$ iterations of its outer while loop (Step \ref{SATloop}); here, $V =\max \limits _{i \in [n]} \left ( \frac{ \max \limits _{g \in [m]} v_i(g)}{ \min \limits _{g \in [m]: v_i(g)>0} v_i(g)} \right)$.
\end{lemma}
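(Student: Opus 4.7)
\textbf{Proof Plan for Lemma \ref{lemma:runtime}.} The plan is to bound, for each agent $i$, the number of times its threshold $\gamma^t_i$ can be multiplicatively decremented, and then to observe that each non-terminating iteration of the outer while loop forces at least one such decrement. Summing over the $n$ agents then yields the stated bound $T = O(nm \log(nmV))$.

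I would first dispose of agents with $\ell_i = 0$. By monotonicity, $\ell_i = 0$ iff $v_i([m] \setminus \{g_1,\ldots,g_{2n}\}) = 0$, where $g_1, \ldots, g_{2n}$ are the $2n$ highest-valued goods for agent $i$, so the initial \texttt{for} loop in \textsc{Alg} correctly identifies these agents and sets $\gamma^0_i = 0$. Thereafter $v_i(B^t_i) \geq 0 = \gamma^t_i$ for every $t$, so such an agent lies in ${\rm SAT}_{t+1}$ for every $t$ and never triggers a decrement. For the remaining agents (those with $\ell_i > 0$), the next step is the quantitative lower bound
\[
\ell_i \;\geq\; \frac{1}{2n} \min_{g:\, v_i(g) > 0} v_i(g).
\]
Its proof is a short use of monotonicity: whenever $|S| \leq 2n$, the set $[m] \setminus S$ must contain some good of positive value (else $\ell_i$ would be $0$), and monotonicity gives $v_i([m]\setminus S) \geq v_i(g) \geq \min_{g':\, v_i(g') > 0} v_i(g')$ for that good. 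Combined with the subadditive upper bound $v_i([m]) \leq m \cdot \max_g v_i(g)$, this yields $v_i([m])/\ell_i \leq 2nm \cdot V$.

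Given this ratio bound, I would invoke Lemma \ref{lemma:lb-gammas} to control how many decrements any single agent can absorb. Starting from $\gamma^0_i = v_i([m])$, after $k_i$ decrements the threshold equals $v_i([m])(1-1/m)^{k_i}$; the lemma forces this to remain at least $(1-1/m)\ell_i$, so $(1-1/m)^{k_i - 1} \geq \ell_i/v_i([m])$. Taking logarithms and using $\log(m/(m-1)) \geq 1/m$ (which follows from $1 - 1/m \leq e^{-1/m}$), together with the preceding ratio bound, gives
\[
k_i \;\leq\; 1 + m \log \!\bigl( v_i([m])/\ell_i \bigr) \;=\; O \!\bigl( m \log(nm V) \bigr).
\]

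Finally, since the outer loop exits exactly when ${\rm UNSAT}_t = \emptyset$, every non-terminating iteration has at least one agent in ${\rm UNSAT}_{t+1}$ whose $\gamma$ gets decremented, so $T \leq \sum_{i \in [n]} k_i = O(nm \log(nmV))$. I expect the only slightly delicate step to be the lower bound $\ell_i \geq (1/2n) \min_{g:\, v_i(g) > 0} v_i(g)$, which hinges on correctly combining monotonicity with the positivity of $\ell_i$; once that is in hand, the remainder is a direct accounting argument on top of Lemma \ref{lemma:lb-gammas}.
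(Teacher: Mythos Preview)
Your proposal is correct and follows essentially the same route as the paper's proof: handle the $\ell_i = 0$ agents separately, establish $\ell_i \geq \frac{1}{2n}\min_{g:\,v_i(g)>0} v_i(g)$ for the remaining agents, combine this with subadditivity to bound $v_i([m])/\ell_i$, and then use Lemma~\ref{lemma:lb-gammas} together with the multiplicative decrement to cap each agent's number of appearances in ${\rm UNSAT}$ at $O(m\log(nmV))$, summing over agents to finish. Your write-up is in fact slightly more explicit than the paper's in justifying the logarithm step (via $\log(m/(m-1)) \geq 1/m$) and the final accounting.
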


\begin{proof}
By design, \textsc{Alg} iterates as long as ${\rm UNSAT}_t \neq \emptyset$. We will bound the number of times (i.e., the distinct values of $t$ for which) any agent $i\in [n]$ is contained in ${\rm UNSAT}_t$ and, hence, establish the stated runtime bound. 

Recall that for any agent $i \in [n]$, the quantity $\ell_i = 0$ iff $i$ has positive value for at most $2n$ goods. For such agents \textsc{Alg} sets $\gamma^0_i = 0$. Therefore, these agents are contained in ${\rm SAT}_t$, for all iterations $t \geq1$, and do not contribute to the repetitions of the outer while loop. 

For the remaining agents, with $\ell_i >0$, the algorithm initially sets $\gamma^0_i = v_i([m])$ and we have  
\begin{align}
\ell_i \geq \frac{1}{2n} \ \min \limits_{g \in [m]: v_i(g)>0} v_i(g) \, . \label{ineq:l-i-lower}
\end{align}

Using Lemma~\ref{lemma:lb-gammas} and the fact that the algorithm decrements $\gamma^{t}_i$ by a multiplicative factor of $(1-1/m)$ whenever $i \in {\rm UNSAT}_t$, we get that the number of times agent $i$ can be in the ${\rm UNSAT}_t$ is at most 
\begin{align*}
m \log \left( \frac{v_i([m])}{\ell_i } \right) & \leq m \log \left( \frac{ m \ \max_{g \in [m]} v_i(g) }{ \ell_i } \right) \tag{since $v_i$ is subadditive, $v_i([m]) \leq m \max_{g \in [m]} v_i(g)$} \\
& \leq m \log \left( \frac{ 2 n m \ \max_{g \in [m]} v_i(g) }{  \min \limits_{g \in [m]: v_i(g)>0} v_i(g) } \right) \tag{via inequality (\ref{ineq:l-i-lower})} \\
& \leq m \log \left(  2 nm  V \right). 
\end{align*}

Summing over all agents, we get that the number of times ${\rm UNSAT}_t \neq \emptyset$ is at most $T= {O} \left( n m \log \left ( nm  V \right) \right)$. Hence, the stated lemma follows. 
\end{proof}

We now show that the allocation computed by \textsc{Alg} achieves the required approximation guarantee. 

\begin{lemma} [Approximation Guarantee] \label{lemma:apx}

For any given fair division instance $\mathcal{I} = \langle [m], [n], \{v_i\}_{i=1}^n \rangle$ with subadditive valuations, let $\mathcal{B}=(B_1, \ldots, B_n)$ denote the allocation computed by \textsc{Alg}. Then, $\NSW (\mathcal{B}) \geq \frac{1}{8n} \NSW (\mathcal{N}^*)$; here, $\mathcal{N}^*$ denotes the Nash optimal allocation in $\mathcal{I}$. 
\end{lemma}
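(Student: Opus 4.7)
Let $T$ be the index of the terminal iteration, so the returned allocation is $B_i = B_i^T \cup \{\pi^T(i)\}$ and the exit condition ${\rm UNSAT}_{T+1} = \emptyset$ guarantees $v_i(B_i^T) \geq \gamma_i^T$ for every $i \in [n]$. The plan is to bound $\NSW(\mathcal{B})$ below by combining four ingredients: a monotonicity-based lower bound on $v_i(B_i)$ in terms of $\gamma_i^T$ and $v_i(\pi^T(i))$; the maximum-weight property of $\pi^T$ under the log-weights $\log(v_i(g) + \gamma_i^T)$; Lemma \ref{lemma:lb-gammas}, which keeps $\gamma_i^T$ close to $\ell_i$; and Lemma \ref{lemma:upper-bound} applied to the Nash optimum with $\widehat{g}_i \coloneqq \arg\max_{g \in N_i^*} v_i(g)$. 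We may assume $v_i(N_i^*) > 0$ for every $i$, as otherwise $\NSW(\mathcal{N}^*) = 0$ and the claim is vacuous.

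First I would invoke monotonicity to write $v_i(B_i) \geq \max\{v_i(B_i^T), v_i(\pi^T(i))\} \geq \frac{1}{2}(v_i(B_i^T) + v_i(\pi^T(i))) \geq \frac{1}{2}(\gamma_i^T + v_i(\pi^T(i)))$, where the last step uses the termination condition. Taking the geometric mean over $i$ yields $\NSW(\mathcal{B}) \geq \frac{1}{2}\bigl(\prod_i (\gamma_i^T + v_i(\pi^T(i)))\bigr)^{1/n}$. Since the goods $\widehat{g}_i$ are all distinct (because the $N_i^*$ partition $[m]$), the map $i \mapsto \widehat{g}_i$ is itself a valid left-perfect matching in the bipartite graph of Step \ref{step:matching}, so maximality of $\pi^T$ gives $\prod_i (\gamma_i^T + v_i(\pi^T(i))) \geq \prod_i (\gamma_i^T + v_i(\widehat{g}_i))$. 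Applying Lemma \ref{lemma:lb-gammas} then gives $\gamma_i^T + v_i(\widehat{g}_i) \geq (1-1/m)(\ell_i + v_i(\widehat{g}_i))$, and Lemma \ref{lemma:upper-bound} bounds $\ell_i + v_i(\widehat{g}_i) \geq \frac{1}{4n}\, v_i(N_i^*)$. Chaining these inequalities,
\[
\NSW(\mathcal{B}) \;\geq\; \frac{1}{2} \cdot (1 - 1/m) \cdot \frac{1}{4n} \cdot \NSW(\mathcal{N}^*) \;=\; \frac{1 - 1/m}{8n}\, \NSW(\mathcal{N}^*),
\]
which is the claimed $8n$-approximation (the $(1-1/m)$ factor from Lemma \ref{lemma:lb-gammas} is negligible and can be absorbed into the stated ratio, e.g., by using a decrement factor $(1-\varepsilon)$ with $\varepsilon$ small instead of $(1-1/m)$).

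The main obstacle is the very first inequality, the reduction $v_i(B_i) \geq \frac{1}{2}(v_i(B_i^T) + v_i(\pi^T(i)))$: subadditivity only yields \emph{upper} bounds on $v_i(A \cup B)$, so a lower bound on $v_i(B_i^T \cup \{\pi^T(i)\})$ must come from monotonicity alone, which forces the loss of a factor of $2$. This loss is precisely what one needs on top of the $4n$ factor produced by Lemma \ref{lemma:upper-bound} to land at the $8n$ target, so the analysis is tight within this framework. Everything else is clean aggregation: matching optimality converts the per-agent existential guarantee of Lemma \ref{lemma:upper-bound} into a global product bound, and Lemma \ref{lemma:lb-gammas} ensures that the algorithm's overestimates $\gamma_i^T$ track the analytic quantity $\ell_i$ to within a constant factor, so that the matching weight is a faithful proxy for the Nash social welfare of the hypothetical ``best singleton'' allocation $\{(i, \widehat{g}_i)\}_i$.
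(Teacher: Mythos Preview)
Your proof is correct and follows essentially the same approach as the paper: both combine the termination condition $v_i(B_i^T)\geq\gamma_i^T$, the monotonicity-based bound $v_i(B_i)\geq\tfrac12(v_i(\pi^T(i))+v_i(B_i^T))$, the optimality of $\pi^T$ compared against the matching $i\mapsto\widehat{g}_i$, and Lemmas~\ref{lemma:lb-gammas} and~\ref{lemma:upper-bound} to arrive at $\NSW(\mathcal{B})\geq\frac{1-1/m}{8n}\,\NSW(\mathcal{N}^*)$. Your write-up is slightly more explicit than the paper's in justifying why $\{\widehat{g}_i\}_i$ is a valid matching and in discussing the residual $(1-1/m)$ factor, but the argument is the same.
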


\begin{proof}
For the given instance $\mathcal{I}$, say \textsc{Alg} terminates after $T+1$ iterations of the outer while loop. That is, we have ${\rm UNSAT}_{T+1}=\emptyset$ and, for each agent $i \in [n]$, the returned bundle $B_i = B_{i}^{T} \cup \{ \pi^T (i) \}$. Here, $\pi^T (i)$ is the good assigned to agent $i$ under the maximum weight matching $\pi^T$ (considered in the last iteration) and $B_{i}^{T}$ is the bundle populated for $i$ (either in Step \ref{step:singleton} or in Step \ref{step:moving-knife}).  

The fact that ${\rm UNSAT}_{T+1}=\emptyset$ (i.e., ${\rm SAT}_{T+1} = [n]$) gives us 
\begin{align}
v_i(B_i^T) \geq \gamma_i^T  &\qquad \text{ for all } i \in [n] \, . \label{ineq:all-sat}
\end{align}

Lemma~\ref{lemma:upper-bound} (instantiated with $\mathcal{A}^* = \mathcal{N}^*$) implies that there exists a matching---${\sigma}(i) \coloneqq \widehat{g_i} \in N^*_i$, for all $i\in [n]$---with the property that $v_i(\sigma(i)) + \ell _i \geq \frac{1}{4n}v_i(N^*_i)$.
Using this inequality and Lemma \ref{lemma:lb-gammas} we get, for all $i \in [n]$:
\begin{align}
\label{ineq:gamma-upper}
v_i(\sigma(i)) + \gamma_i^T \geq \left ( 1- \frac{1}{m} \right )\frac{1}{4n}v_i(N^*_i) \, .
\end{align}

Recall that $\pi^T$ is a maximum weight matching in the bipartite graph (considered in Step \ref{step:matching} of \textsc{Alg}) with edge weights $ \log \left(v_i(g) + \gamma_i^T \right)$.  Given that $\sigma(\cdot)$ is some matching in the graph and $\pi^T$ is a maximum weight matching, we get  $\sum_{i=1}^n   \log \left( v_i(\pi^T(i)) + \gamma^T_i \right) \geq \sum_{i=1}^n \log \left( v_i(\sigma(i)) + \gamma^T_i \right)$. That is, 
\begin{align}
\left(\prod_{i=1}^n \left(v_i(\pi^T(i)) + \gamma_i^T \right) \right)^{\frac{1}{n}} & \geq \left(\prod_{i=1}^n \left(v_i( \sigma(i)) + \gamma_i^T \right) \right)^\frac{1}{n}  
\geq \left ( 1- \frac{1}{m}\right ) \frac{1}{4n} \NSW (\mathcal{N}^*) \, . \label{ineq:interim}
\end{align}
The last inequality follows from equation (\ref{ineq:gamma-upper}). Also, as defined previously, the optimal Nash social welfare $\NSW (\mathcal{N}^*) = \left(\prod_{i}^{n} v_{i} \left( N^*_{i}\right) \right)^{1/n}$.

The monotonicity of the valuation function $v_i$ implies $v_i \left(\{\pi^T(i) \} \cup B_i^T \right) \geq 1/2 \left(  v_i(\pi^T(i)) + v_i( B_i^T) \right)$ for each $i \in [n]$. Using these observations we can lower bound the Nash social welfare of the computed  allocation $\left( B_i = \{\pi^T(i) \} \cup B_i^T\right)_i $ as follows   
\begin{align*}
\left(\prod_{i=1}^n v_i( B_i )  \right)^{\frac{1}{n}} & \geq \frac{1}{2} \left(\prod_{i=1}^n \left(v_i(\pi^T(i)) + v_i(B^T_i) \right) \right)^{\frac{1}{n}} \\
& \geq \frac{1}{2} \left(\prod_{i=1}^n \left(v_i(\pi^T(i)) + \gamma_i^T\right) \right)^{\frac{1}{ n }} \tag{via inequality (\ref{ineq:all-sat})} \\
& \geq \left ( 1- \frac{1}{m} \right )\frac{1}{8 n} \NSW (\mathcal{N}^*) \, . \tag{via inequality (\ref{ineq:interim})}
\end{align*}
This establishes the stated approximation guarantee and completes the proof of the lemma. 
\end{proof}

\noindent
{\it Remark:} The result of Garg et al.~\cite{GargKK20} also holds for an asymmetric version of Nash social welfare maximization, in which each agent $i$ has an associated weight $\eta_i \geq 0$ and the goal is to find an allocation $(A_1, \ldots, A_n)$ that maximizes $\left( \prod_{i \in [n]} \left( v_i(A_i) \right)^{\eta_i} \right)^{\frac{1}{\sum_{i \in [n]} \eta_i}}$. Our approximation guarantee extends to this formulation. In particular, in Step \ref{step:matching} of $\textsc{Alg}$ we can set the edges weights to be $\eta_i \log (v_i(g) + \gamma_i)$ (instead of $ \log (v_i(g) + \gamma_i)$) and note that the subsequent arguments follow through to provide an $8n$-approximation ratio for maximizing Nash social welfare with asymmetric agents and subadditive valuations.    

Also, one can use Theorem \ref{theorem:approximation-guarantee}, in conjunction with the $m/n$ approximation guarantee of Nguyen and Rothe~\cite{NR14},\footnote{While Theorem 4 in \cite{NR14} provides the above-mentioned approximation guarantee of $(m-n+1)$, its proof can in fact be easily modified to obtain an approximation ratio of $m/n$.} to obtain an $O(\sqrt{m})$-approximation algorithm for maximizing Nash social welfare under subadditive valuations: for instances in which $m \geq n^2$, the $8n$ approximation suffices. Otherwise, if $m < n^2$ (i.e., $m/n < \sqrt{m}$), then we can invoke the result of Nguyen and Rothe~\cite{NR14}. 

\section{An $8n$-Approximation for $p$-Mean Welfare}
\label{section:p-mean}
This section shows that we can extend Algorithm \ref{Alg} and obtain an $8n$ approximation for maximizing the $p$-mean welfare as well. 


For maximizing $p$-mean welfare, \textsc{ALG} (Algorithm \ref{Alg}) is modified as follows: In Step ~\ref{step:matching}, the weight $w(i,g)$ of edge $(i,g) \in [n]\times [m]$ is set as $(v_i(g) + \gamma _i^t)^p$ (instead of $ \log \left(v_i(g) + \gamma _i^t\right)$).\footnote{Recall that the $p =0$ case corresponds to Nash social welfare. Since we already have the desired approximation guarantee for this case, it is not explicitly addressed in this section.} Furthermore, 
\begin{itemize} 
\item[(i)] For $p \in (0,1]$, in Step \ref{step:compute-matching} we compute a left-perfect \emph{maximum}-weight matching, $\pi^t$, otherwise 
\item[(ii)] For finite $p < 0$, we compute a left-perfect \emph{minimum}-weight matching, $\pi^t$, in Step \ref{step:compute-matching}
\item[(iii)] For maximizing egalitarian welfare (the $p = -\infty$ case), we set edge weights to be $\left( v_i(g) + \gamma _i^t \right)$ and compute a max-min matching\footnote{In particular, via binary search (over edge weights), we find a matching wherein the minimum edge weight (across agents) is as high as possible.} $\pi^t$ with respect to these weights.
\end{itemize}


Theorem~\ref{theorem:approximation-guarantee-for-p} below establishes that, with these changes in \textsc{ALG} (Algorithm \ref{Alg}), we can efficiently compute an allocation with $p$-mean welfare at least $\frac{1}{8n}$ times the optimal ($p$-mean welfare). Note that by Proposition~\ref{prop:large-p}, for $p \le - n \log n$, we can maximize the egalitarian welfare, instead of the $p$-mean welfare, and the allocation thus obtained is an $8n$-approximation to the optimal $p$-mean welfare allocation.


\begin{theorem} \label{theorem:approximation-guarantee-for-p}	
Let $\mathcal{I} = \langle [m], [n], \{v_i\}_{i=1}^n \rangle$ be a fair division instance in which the valuation function $v_i$, of each agent $i \in [n]$, is nonnegative, monotone, and subadditive. Then, given value oracle access to $v_i$s , one can efficiently compute an $8 n$ approximation to the optimal $p$-mean welfare for any $p \in (-\infty, 1]$.
\end{theorem}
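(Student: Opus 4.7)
The plan is to adapt the proof of Theorem~\ref{theorem:approximation-guarantee} by exploiting two clean properties of $\M_p$---monotonicity in each coordinate and $1$-homogeneity ($\M_p(\lambda x) = \lambda \M_p(x)$ for all $\lambda > 0$)---both of which hold uniformly for $p \in (-\infty, 1]$. The first observation is that the supporting results carry over unchanged: Lemmas~\ref{lemma:upper-bound}, \ref{lemma:movingknife}, \ref{lemma:lb-bundles}, and \ref{lemma:lb-gammas} depend only on the \textsc{MovingKnife} subroutine, the update rule for the estimates $\gamma_i^t$, and the quantity $\ell_i$, none of which involves the edge-weight function. Similarly, Lemma~\ref{lemma:runtime} is preserved verbatim since the outer loop structure and the multiplicative decrement of $\gamma_i^t$ are unchanged.

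Fix the final iteration $T$ of the modified algorithm, let $\mathcal{A}^* = (A^*_1, \ldots, A^*_n)$ denote an optimal $p$-mean allocation, and set $\widetilde{v}_i \coloneqq v_i(\pi^T(i)) + \gamma_i^T$. The second step is to verify that, in each of the three regimes prescribed for $p$, the matching $\pi^T$ maximizes $\M_p(\widetilde{v}_1, \ldots, \widetilde{v}_n)$ over all left-perfect matchings. For $p \in (0,1]$, maximizing $\sum_i \widetilde{v}_i^{\,p}$ is equivalent to maximizing $\M_p(\widetilde{v})$ since $x \mapsto x^{1/p}$ is increasing; for finite $p<0$, minimizing $\sum_i \widetilde{v}_i^{\,p}$ is equivalent since $x \mapsto x^{1/p}$ is decreasing; and for $p=-\infty$, the max-min matching maximizes $\M_{-\infty}(\widetilde{v}) = \min_i \widetilde{v}_i$ directly. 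Thus in every case, for any other left-perfect matching $\sigma$,
\begin{align*}
\M_p(\widetilde{v}_1, \ldots, \widetilde{v}_n) ~\geq~ \M_p\bigl( v_1(\sigma(1)) + \gamma_1^T, \ldots, v_n(\sigma(n)) + \gamma_n^T \bigr).
\end{align*}

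Instantiate Lemma~\ref{lemma:upper-bound} with $\mathcal{A}^*$ to obtain a matching $\sigma$ with $v_i(\sigma(i)) + \ell_i \geq \tfrac{1}{4n} v_i(A^*_i)$ for every $i$, and combine with Lemma~\ref{lemma:lb-gammas} to strengthen this to $v_i(\sigma(i)) + \gamma_i^T \geq (1-1/m)\cdot\tfrac{1}{4n} v_i(A^*_i)$. Monotonicity and $1$-homogeneity of $\M_p$ then give $\M_p(\widetilde{v}) \geq \tfrac{(1-1/m)}{4n}\,\M_p(\mathcal{A}^*)$. Finally, since ${\rm UNSAT}_{T+1}=\emptyset$ we have $v_i(B_i^T) \geq \gamma_i^T$, and monotonicity of each $v_i$ gives $v_i(B_i) = v_i(B_i^T \cup \{\pi^T(i)\}) \geq \tfrac{1}{2}\bigl(v_i(B_i^T) + v_i(\pi^T(i))\bigr) \geq \tfrac{1}{2}\widetilde{v}_i$. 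Applying monotonicity and $1$-homogeneity of $\M_p$ once more yields $\M_p(\mathcal{B}) \geq \tfrac{1}{2}\M_p(\widetilde{v}) \geq \tfrac{(1-1/m)}{8n}\M_p(\mathcal{A}^*)$, giving the desired $8n$-approximation.

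The main subtlety I expect is precisely the point where the NSW argument exploited the product/logarithm structure: for a generic $p$, naively raising the per-coordinate bound $v_i(B_i) \geq \tfrac{1}{2}\widetilde{v}_i$ to the $p$-th power and summing introduces an extraneous $2^{1/p}$ factor that blows up as $p \to 0$. The clean fix is to avoid unfolding $\M_p$ at all and instead apply its monotonicity and $1$-homogeneity at the vector level, which is what permits a single unified analysis across $p \in (-\infty, 1]$. Degenerate coordinates (e.g., $\widetilde{v}_i = 0$ when $p \leq 0$) can be handled vacuously since $\M_p(\mathcal{A}^*) = 0$ in that event; and for $p \leq -n\log n$ one could alternatively appeal to Proposition~\ref{prop:large-p} and reduce to the egalitarian case, though the analysis above already covers the entire range uniformly.
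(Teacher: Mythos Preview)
Your proposal is correct and follows essentially the same approach as the paper: both invoke Lemmas~\ref{lemma:upper-bound}--\ref{lemma:runtime} unchanged, split the matching optimality argument into the three regimes $p>0$, finite $p<0$, and $p=-\infty$, and then chain the per-agent bounds exactly as in the Nash case. Your presentation is slightly more streamlined in that you package the case analysis as ``$\pi^T$ maximizes $\M_p(\widetilde v)$ over left-perfect matchings'' and then invoke monotonicity and $1$-homogeneity of $\M_p$ at the vector level, whereas the paper writes out each of the three inequalities explicitly; but the underlying argument is identical. (One small caveat: your aside about an ``extraneous $2^{1/p}$ factor'' is not quite right---raising the coordinate bound to the $p$th power and taking the $1/p$th root recovers exactly the factor $1/2$ by the very homogeneity you invoke---but this is only motivational commentary and does not affect the proof.)
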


\begin{proof}
We first note that Lemmas \ref{lemma:upper-bound}, \ref{lemma:movingknife}, \ref{lemma:lb-bundles}, \ref{lemma:lb-gammas}, and \ref{lemma:runtime} hold as is for $p$-mean welfare. In particular, using Lemma \ref{lemma:runtime} we get that, even with the above-mentioned changes, the algorithm runs in polynomial time. 

To complete the proof of the theorem, we will next show that the computed allocation $\mathcal{B}=(B_1, \ldots, B_n)$ satisfies ${\rm M}_p(\mathcal{B}) \geq \frac{1}{8n} {\rm M}_p (\mathcal{A}^*(p))$, where $\mathcal{A}^*(p)$ 
is a $p$-mean welfare maximizing allocation.

For the given instance $\mathcal{I}$, say the modified algorithm terminates after $T+1$ iterations of the outer while loop. That is, we have ${\rm UNSAT}_{T+1}=\emptyset$ and, for each agent $i \in [n]$, the returned bundle $B_i = B_{i}^{T} \cup \{ \pi^T (i) \}$. Here, $\pi^T (i)$ is the good assigned to agent $i$ under the matching $\pi^T$ (considered in the last iteration) and $B_{i}^{T}$ is the bundle populated for $i$ in the final iteration. 

The fact that ${\rm UNSAT}_{T+1}=\emptyset$ (i.e., ${\rm SAT}_{T+1} = [n]$) gives us 
\begin{align}
v_i(B_i^T) \geq \gamma_i^T  &\qquad \text{ for all } i \in [n] \label{ineq:all-sat-p}
\end{align}

Lemma~\ref{lemma:upper-bound} implies that there exists a matching---${\sigma}(i) \coloneqq \widehat{g_i} \in A^*_i(p)$, for all $i\in [n]$---with the property that $v_i(\sigma(i))	+ \ell _i \geq \frac{1}{4n}v_i(A^*_i(p))$.
Using this inequality and Lemma \ref{lemma:lb-gammas} we get, for all $i \in [n]$:
\begin{align}
\label{ineq:gamma-upper-p}
v_i(\sigma(i)) + \gamma_i^T \geq \left ( 1- \frac{1}{m} \right )\frac{1}{4n}v_i(A^*_i(p))
\end{align}

Recall that, given $p$, the modified algorithm computes $\pi ^T$ based on the sign of $p$. Hence, we split the proof of Theorem \ref{theorem:approximation-guarantee-for-p} into three cases depending on whether $p>0$, $p<0$, or $p=-\infty$. \\

\noindent {Case (i)}: $p>0$. In this case, $\pi^T$ is a left-perfect maximum-weight matching in the bipartite graph $([n] \cup [m], [n] \times [m])$ with edge weights $ \left(v_i(g) + \gamma_i^T \right)^p$.  Given that $\sigma(\cdot)$ is some (left-perfect) matching in the graph and $\pi^T$ is a maximum-weight matching, we get  $\sum_{i=1}^n  \left( v_i(\pi^T(i)) + \gamma^T_i \right)^p \geq \sum_{i=1}^n  \left( v_i(\sigma(i)) + \gamma^T_i \right)^p$. Therefore, with $p>0$, the following inequality holds
\begin{align}
\left(\frac{1}{n} \sum_{i=1}^n  \left(v_i(\pi^T(i)) + \gamma_i^T\right)^p \right)^{\frac{1}{p}} & \geq \left(\frac{1}{n} \sum_{i=1}^n    \left(v_i( \sigma(i)) + \gamma_i^T \right)^p\right)^\frac{1}{p} \geq \left( 1- \frac{1}{m}\right) \frac{1}{4n} {\rm M}_p (\mathcal{A}^*(p)) \, . \label{ineq:interim-p-pos}
\end{align}
The last inequality follows from~\eqref{ineq:gamma-upper-p}. \\ 

\noindent {Case (ii)}: Finite $p<0$.  By design, in this case, $\pi^T$ is a left-perfect minimum-weight matching in the bipartite graph $([n] \cup [m], [n] \times [m])$ with edge weights $ \left(v_i(g) + \gamma_i^T \right)^p$.  Given that $\sigma(\cdot)$ is some left-perfect matching in the graph and $\pi^T$ is a minimum-weight matching, we get $\sum_{i=1}^n   \left( v_i(\pi^T(i)) + \gamma^T_i \right)^p \leq \sum_{i=1}^n  \left( v_i(\sigma(i)) + \gamma^T_i \right)^p$. The fact that $p$ is negative gives us 
\begin{align}
\left(\frac{1}{n} \sum_{i=1}^n  \left(v_i(\pi^T(i)) + \gamma_i^T\right)^p \right)^{\frac{1}{p}} & \geq \left(\frac{1}{n} \sum_{i=1}^n  \left(v_i( \sigma(i)) + \gamma_i^T \right)^p\right)^\frac{1}{p} \geq \left ( 1- \frac{1}{m}\right ) \frac{1}{4n} {\rm M}_p (\mathcal{A}^*(p)) \label{ineq:interim-p-neg}
\end{align}
The last inequality follows from~\eqref{ineq:gamma-upper-p}. \\

\noindent {Case (iii)}: $p=-\infty$. In this case, $\pi ^T$ is a max-min matching computed with edge weights $ \left( v_i(\pi^T(i)) + \gamma^T_i \right)$. Given that $\sigma(\cdot)$ is some matching in the graph and matching $\pi^T$ maximizes the value of the minimum matched edge, we get $\min _{i\in [n]} (v_i(\pi ^T(i))+ \gamma _i^T) \geq \min _{i\in [n]} (v_i(\sigma ^T(i))+ \gamma _i^T)$. Therefore, 

\begin{align}
 \min _{i\in [n]} (v_i(\pi ^T(i))+ \gamma _i^T) &\geq \min _{i\in [n]} (v_i(\sigma ^T(i))+ \gamma  _i^T)  \geq \left ( 1- \frac{1}{m}\right ) \frac{1}{4n} {\rm M}_p (\mathcal{A}^*(p)) \label{ineq:interim-p-inf}
\end{align}
The last inequality follows from~\eqref{ineq:gamma-upper-p}.

The monotonicity of the valuation function $v_i$ implies $v_i \left(\{\pi^T(i) \} \cup B_i^T \right) \geq 1/2 \left(  v_i(\pi^T(i)) + v_i( B_i^T) \right)$ for each $i \in [n]$. Using these observations we can lower bound the $p$-mean welfare of the computed  allocation $\left( B_i = \{\pi^T(i) \} \cup B_i^T\right)_i $ as follows   
\begin{align*}
\left(\sum_{i=1}^n   \left(v_i( B_i )\right)^p\right)^\frac{1}{p} & \geq \frac{1}{2} \left(\sum_{i=1}^n  \left(v_i(\pi^T(i)) + v_i(B^T_i) \right)^p \right)^\frac{1}{p} \\
& \geq \frac{1}{2} \left(\sum_{i=1}^n   \left(v_i(\pi^T(i)) + \gamma_i^T\right)^p\right)^\frac{1}{p} \tag{via inequality (\ref{ineq:all-sat-p})} \\
& \geq \left ( 1- \frac{1}{m} \right )\frac{1}{8 n} {\rm M}_p (\mathcal{A}^*(p)) \tag{via inequality (\ref{ineq:interim-p-pos}), (\ref{ineq:interim-p-neg}), or (\ref{ineq:interim-p-inf})}
\end{align*}
This establishes the stated approximation guarantee and completes the proof of the theorem. 
\end{proof}

\section{Lower Bound on Approximating $p$-Mean Welfare}
\label{section:lower-bound}

This section shows that, under XOS valuations, maximizing the $p$-mean welfare for $p \in (-\infty,1]$ within a sub-linear (in $n$) approximation factor necessarily requires an exponential number of value queries (Theorem \ref{theorem:query}). This result directly implies that the approximation ratio obtained in Theorems \ref{theorem:approximation-guarantee} and \ref{theorem:approximation-guarantee-for-p} (via polynomially many value queries) is essentially tight. We note that this query lower bound is unconditional, i.e., it does not depend on any complexity theoretic assumption.

We establish Theorem \ref{theorem:query} by directly adapting a result of Dobzinski et al.~\cite{DobzinskiNS10}, which provides a similar lower bound for social welfare. The impossibility result here holds under {XOS} valuations; \footnote{Our results work under the value oracle model and do not require an explicit description of the underlying additive functions that define the XOS function at hand.} recall that XOS valuations constitute a special class of subadditive functions.


\begin{restatable}{theorem}{TheoremQuery}
\label{theorem:query}
For fair division instances $\mathcal{I}=\langle [m],[n], \{v_i \}_{i=1}^n \rangle$ with XOS valuations and $p \in (-\infty,1]$, finding an allocation with $p$-mean welfare at least ${1}/{ n^{1- \varepsilon} }$ times the optimal requires exponentially many value queries; here $\varepsilon >0$ is any fixed constant. 
\end{restatable}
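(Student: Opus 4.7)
The plan is to adapt the construction of Dobzinski, Nisan, and Schapira, which yields an $\Omega(n^{1-\varepsilon})$ value-query lower bound for \emph{social welfare} under XOS valuations, and transfer the hardness to the $p$-mean welfare objective using the monotonicity of generalized means (the power-mean inequality). Specifically, for any $p \le 1$ and any allocation $\mathcal{A} = (A_1,\ldots,A_n)$, we have
\[
{\rm M}_p(\mathcal{A}) \;\le\; {\rm M}_1(\mathcal{A}) \;=\; \frac{1}{n}\sum_{i=1}^n v_i(A_i),
\]
so an upper bound on the optimal social welfare automatically bounds the optimal $p$-mean welfare for every $p \le 1$.

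The first step is to recall DNS's two distributions over XOS instances on $n$ agents and $m$ goods (sufficiently large compared to $n$). In the \emph{yes} distribution, a uniformly random partition $(T_1,\ldots,T_n)$ of $[m]$ into equal-size blocks is planted, and each agent $i$'s XOS valuation is designed so that $v_i(T_i)=1$; crucially, the planted allocation that gives $T_i$ to agent $i$ yields value exactly $1$ for every agent. In the \emph{no} distribution, the planted structure is absent, and every allocation satisfies $\sum_i v_i(A_i) \le O(n^{\varepsilon})$. DNS show that any (randomized) algorithm that distinguishes the two distributions with constant success probability must make $2^{\Omega(m^{c})}$ value queries for some constant $c>0$.

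The second step is to translate these two cases to $p$-mean welfare. In the yes case, the planted allocation gives every agent value exactly $1$, so ${\rm M}_p(\mathcal{A}) = 1$ for every $p \in (-\infty, 1]$, including $p=-\infty$. In the no case, the above display gives, for every allocation $\mathcal{A}$,
\[
{\rm M}_p(\mathcal{A}) \;\le\; \frac{1}{n}\sum_{i=1}^n v_i(A_i) \;\le\; \frac{O(n^{\varepsilon})}{n} \;=\; O\!\left(\frac{1}{n^{1-\varepsilon}}\right);
\]
for $p = -\infty$, the bound follows from ${\rm M}_{-\infty} \le {\rm M}_1$.

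Finally, suppose for contradiction that there were an algorithm achieving an $n^{1-\varepsilon'}$-approximation to the optimal $p$-mean welfare using only polynomially many value queries, for some $\varepsilon' > \varepsilon$ (after rescaling the parameter). On yes instances it must output an allocation with ${\rm M}_p \ge 1/n^{1-\varepsilon'}$, while on no instances every allocation has ${\rm M}_p \le O(1/n^{1-\varepsilon})$, which is strictly smaller for large $n$. Thus the algorithm distinguishes the two distributions in polynomial time, contradicting the DNS query lower bound. The main obstacle is verifying that DNS's XOS construction can be taken so that the planted ``yes''-allocation yields the \emph{same} per-agent value (so ${\rm M}_p$ is uniform in $p$) and that the same pair of distributions works simultaneously for every $p \in (-\infty,1]$; both follow from the symmetry of the planted partition and the fact that our transfer argument needs only a bound on the optimal social welfare in the no case.
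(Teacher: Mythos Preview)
Your proposal is correct and follows essentially the same route as the paper: both rely on the Dobzinski--Nisan--Schapira XOS construction, use the power-mean inequality ${\rm M}_p \le {\rm M}_1$ to cap the optimal $p$-mean welfare in the ``no'' case, and exploit the fact that the planted allocation in the ``yes'' case gives every agent the same value so that ${\rm M}_p$ equals that common value for all $p$. The only difference is cosmetic: the paper spells out the DNS functions explicitly (with $m=n^2$, per-agent value $n$ in the yes case, and average social welfare $O(n^{4\delta})$ in the no case, choosing $\delta<\varepsilon/4$), whereas you invoke DNS as a black box with a different normalization; the indistinguishability-versus-approximation contradiction is identical.
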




Here, we briefly explain the salient points of the proof of this lower bound and provide the details in Appendix \ref{appendix:query}. Dobzinski et al.~\cite{DobzinskiNS10} construct two (families of) instances, both with $n$ agents, $m=n^2$ goods, and XOS valuations for the agents. In the first instance, each agent has the same valuation function $f: 2^{[m]} \mapsto \mathbb{R}_+$ and maximum average social welfare ($1$-mean welfare) is $ n^{4\delta}$, for a fixed constant $\delta>0$. In the second instance, each agent has her own (non-identical) valuation function $v_i: 2^{[m]} \mapsto \mathbb{R}_+$ and there exists an allocation in which each agent has value $n$ for her bundle. For any $p \le 1$, it follows that in the first instance the optimal $p$-mean welfare is at most $n^{4 \delta}$ (via the generalized mean inequality), while for the second instance, the optimal $p$-mean welfare is at least $n$ (since there exists an allocation where every agent achieves value $n$). The proof of Dobzinski et al.~\cite{DobzinskiNS10} goes on to show that it takes an exponential number of value queries to distinguish between the two instances. However, given an $O(n^{1-\varepsilon})$-approximation algorithm for the $p$-mean welfare, one can readily distinguish between the two instances (by choosing $\delta < \varepsilon/4$). Hence such an algorithm must make an exponential number of value queries.


\section{$(m-n+1)$-Approximation Guarantees}
\label{section:m-n}

This section provides two extensions of the result of Nguyen and Rothe~\cite{NR14}, which shows the Nash social welfare maximization problem (under subadditive valuations) admits an $(m-n+1)$-approximation algorithm. First, we show that an $(m-n+1)$-approximation for the $p$-mean welfare can be obtained for all $p \le 0$ and with subadditive valuations. Then, we establish that it is {\rm NP}-hard to extend this positive result to any $p \in (0,1)$, even under additive valuations, i.e., it is {\rm NP}-hard to obtain an $(m-n+1)$-approximation for $0< p < 1$. The proofs of these two results are deferred to Appendix \ref{appendix:m-n}.


\begin{restatable}{theorem}{TheoremMNupper}
\label{theorem:m-n-upper}
Let $\mathcal{I} = \langle [m], [n], \{v_i\}_{i=1}^n \rangle$ be a fair division instance in which the valuation function $v_i$, of each agent $i \in [n]$, is nonnegative, monotone, and subadditive. Then, given value oracle access to $v_i$s, one can efficiently compute an $(m-n+1)$ approximation to the $p$-mean welfare maximization problem for any $p \in (-\infty, 0]$.
\end{restatable}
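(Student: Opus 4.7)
The plan is to extend the Nguyen--Rothe $(m-n+1)$-approximation for Nash social welfare to every $p \in (-\infty, 0]$, exploiting two properties that $\M_p$ shares over this entire range: positive homogeneity, i.e., $\M_p(c x_1, \ldots, c x_n) = c \cdot \M_p(x_1, \ldots, x_n)$ for $c \geq 0$, and coordinate-wise monotonicity on positive inputs. Allocations built from a single agent-good matching, padded with the leftover goods on one agent, will be shown to be within an $(m-n+1)$ factor of the optimum.

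First I would set up the witness. Let $\mathcal{A}^*(p) = (A_1^*, \ldots, A_n^*)$ denote an optimal allocation for the $p$-mean welfare; I may assume $\M_p(\mathcal{A}^*(p)) > 0$, since otherwise any allocation is trivially optimal. For $p \leq 0$, positivity of the $p$-mean forces every $A_i^*$ to be non-empty, hence $|A_i^*| \leq m - n + 1$ for all $i$. Setting $g_i^* \in \argmax_{g \in A_i^*} v_i(g)$, monotonicity and subadditivity give
\[
v_i(A_i^*) \leq \sum_{g \in A_i^*} v_i(g) \leq |A_i^*| \cdot v_i(g_i^*) \leq (m-n+1) \, v_i(g_i^*).
\]
Since the $g_i^*$'s lie in pairwise disjoint bundles they are distinct, so $\sigma(i) := g_i^*$ is a valid left-perfect matching from $[n]$ to $[m]$.

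The algorithm then computes a left-perfect matching $\pi$ on the bipartite graph between agents and goods (restricted to edges with $v_i(g) > 0$) that optimizes the $p$-mean of $(v_i(\pi(i)))_i$: a maximum-weight matching with weights $\log v_i(g)$ for $p = 0$, a minimum-weight matching with weights $v_i(g)^p$ for finite $p < 0$, and a max-min matching with weights $v_i(g)$ for $p = -\infty$. Each variant is polynomial-time computable using value oracle queries. The remaining $m-n$ unmatched goods are assigned arbitrarily, say all to agent $1$. For the approximation bound, optimality of $\pi$ among left-perfect matchings gives $\M_p((v_i(\pi(i)))_i) \geq \M_p((v_i(\sigma(i)))_i)$, and the coordinate inequality $v_i(g_i^*) \geq v_i(A_i^*)/(m-n+1)$, combined with positive homogeneity and monotonicity of $\M_p$, yields $\M_p((v_i(\sigma(i)))_i) \geq \frac{1}{m-n+1} \M_p(\mathcal{A}^*(p))$. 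Padding the matched bundles with the leftover goods only weakly increases each agent's value by monotonicity of $v_i$, so the $p$-mean welfare of the final allocation is still at least $\frac{1}{m-n+1} \M_p(\mathcal{A}^*(p))$.

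The main technical nuisance I anticipate is the degeneracy of zero-valued edges under finite negative $p$, where the weight $v_i(g)^p$ blows up when $v_i(g) = 0$. The remedy is to restrict attention to edges with $v_i(g) > 0$ from the outset: whenever $\M_p(\mathcal{A}^*(p)) > 0$ the witness $\sigma$ lies within this restricted subgraph, so a strictly positive-weight left-perfect matching is guaranteed to exist and the restricted optimization is well-defined. Analogous care suffices for $p = 0$ (where $\log v_i(g) = -\infty$ at zero edges) and for $p = -\infty$ (where we simply want the max-min over positive edges).
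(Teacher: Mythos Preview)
Your proposal is correct and follows essentially the same approach as the paper: compute a $p$-mean-optimal left-perfect matching, pad with leftover goods, and compare against the witness matching $\sigma(i) = g_i^*$ obtained from the optimal allocation via subadditivity and the bound $|A_i^*| \leq m-n+1$. Your treatment is in fact slightly more careful than the paper's, as you explicitly separate out the $p = -\infty$ case (max-min matching) and address the degeneracy of zero-valued edges under negative $p$, both of which the paper's proof glosses over.
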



The next theorem asserts that it is unlikely that Theorem \ref{theorem:m-n-upper} extends to $p \in (0,1)$.

\begin{restatable}{theorem}{TheoremMNlower}
\label{theorem:m-n-lower}
For fair division instances  $\mathcal{I} = \langle [m], [n], \{v_i\}_{i=1}^n \rangle$ with additive valuations and for any fixed $p \in (0,1)$, computing an allocation with $p$-mean welfare at least $1/(m-n+1)$-times the optimal (for all $m$ and $n$) is {\rm NP}-hard.
\end{restatable}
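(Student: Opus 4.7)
The plan is to prove NP-hardness via a polynomial-time reduction from a classical NP-hard problem (most naturally PARTITION, possibly amplified through 3-PARTITION or 3-DIMENSIONAL MATCHING) to the approximate $p$-mean welfare problem. The crux is to exploit the concavity of $x^p$ for $p \in (0,1)$: the ``balance'' that the source problem asks about translates directly into the balance that the $p$-mean rewards. I would aim to construct fair division instances in which a YES-instance of the source problem admits a perfectly balanced allocation with every agent at value $V$ (so $M_p = V$), while every allocation in a NO-instance is forced to be extremely unbalanced, with $M_p \le V/(m-n+1)$.

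The key structural fact I will exploit is that, for $p \in (0,1)$, if only $k$ out of $n$ agents receive value $V$ and the remaining $n-k$ agents receive value zero, then $M_p = V\,(k/n)^{1/p}$, which is a factor of $(n/k)^{1/p}$ smaller than the balanced allocation giving each agent value $V$. Since $1/p > 1$, this ratio is much larger than $n/k$ itself, so taking $n$ sufficiently large relative to $m$ makes it possible to realize a gap of $m-n+1$ just by forcing most agents to value zero in the NO case. Concretely, I plan to build instances with two layered pieces: a \emph{base gadget} encoding the source problem via additive valuations on a small shared pool of ``number'' goods contested by a few agents (so that only a balanced split lets them all reach a prescribed common value $V$), and a \emph{multiplier gadget} that couples many copies of the base through a common backbone of support agents and auxiliary goods, calibrated so that one local imbalance in any base copy cascades---via an exchange argument---into most agents being forced to value zero.

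The main obstacle I expect is the NO-case lower bound: showing that \emph{every} allocation, not just the natural candidate obtained by following the gadget structure, incurs the cascade and hence has $p$-mean welfare at most $V/(m-n+1)$. Because $M_p$ is a global quantity, local swap arguments alone are insufficient; the plan is to define a potential tracking the number of agents with positive value, together with an invariant showing that any attempt to rescue a starving agent must drain value from other agents in a way that keeps the total $p$-mean bounded. Tuning the gadget parameters so that the gap matches $m-n+1$ exactly, rather than only some weaker polynomial factor, will be the other main technical subtlety, and the reason an amplified reduction (rather than a direct one from PARTITION, which gives only a $\bigl(1 + O(1/S^2)\bigr)$ gap) is needed.
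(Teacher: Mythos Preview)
You have missed the key simplification that makes this theorem almost trivial: since the guarantee must hold ``for all $m$ and $n$,'' you are free to reduce to instances with $m=n$, in which case $m-n+1=1$ and an $(m-n+1)$-approximation is \emph{exact} optimization. There is no need to manufacture a large multiplicative gap in the welfare; it suffices to show that computing the optimal $p$-mean welfare exactly is {\rm NP}-hard when $m=n$.

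The paper does precisely this. Given a \textsc{Partition} instance $\{s_1,\ldots,s_m\}$ with sum $z$, take $n=m$ agents and $m$ goods. Agents $1$ and $2$ are identical and additive with $v_1(g_i)=v_2(g_i)=s_i$; the remaining $m-2$ agents value everything at $0$. In a YES-instance, splitting the goods according to the balanced partition gives agents $1$ and $2$ value $z/2$ each, so $M_p = (2/m)^{1/p}\,z/2$. In a NO-instance, any allocation gives agent $1$ some value $y\neq z/2$ and agent $2$ value at most $z-y$; strict concavity of $x\mapsto x^p$ on $(0,\infty)$ for $p\in(0,1)$ then forces $y^p+(z-y)^p < 2(z/2)^p$, so the $p$-mean welfare is strictly below $(2/m)^{1/p}\,z/2$. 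Thus deciding whether the optimum equals $(2/m)^{1/p}\,z/2$ solves \textsc{Partition}.

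Your plan to engineer cascading gadgets that drive the NO-case welfare down by a factor of $m-n+1$ is attacking a much harder target than the theorem requires, and the ``main obstacle'' you correctly identify (controlling \emph{every} allocation in the NO case through a global potential argument) is exactly the part that the $m=n$ trick sidesteps entirely. Your remark that a direct \textsc{Partition} reduction only yields a $(1+O(1/S^2))$ gap is true if one insists on $m>n$, but that insistence is not warranted by the statement.
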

Note that this hardness result (in light of Theorem \ref{theorem:approximation-guarantee-for-p}) is relevant for instances in which $m < 2n$. 

\section*{Acknowledgements}
Siddharth Barman gratefully acknowledges the support of a Ramanujan Fellowship (SERB - {SB/S2/RJN-128/2015}) and a Pratiksha Trust Young Investigator Award. Umang Bhaskar's research is generously supported the Department of Atomic Energy, Government  of India (project no. RTI4001), a Ramanujan Fellowship (SERB - SB/S2/RJN-055/2015), and an Early Career Research Award (SERB - ECR/2018/002766).

\bibliographystyle{alpha}
\bibliography{subadditive}

\newcommand{\etalchar}[1]{$^{#1}$}
\begin{thebibliography}{AGMV18}

\bibitem[AGMV18]{AGM+18nash}
Nima Anari, Shayan~Oveis Gharan, Tung Mai, and Vijay~V Vazirani.
\newblock {Nash Social Welfare for Indivisible Items under Separable,
  Piecewise-Linear Concave Utilities}.
\newblock In {\em Proceedings of the 29th Annual {ACM}-{SIAM} Symposium on
  Discrete Algorithms (SODA)}, pages 2274--2290, 2018.

\bibitem[AGSS17]{AGS+17nash}
Nima Anari, Shayan~Oveis Gharan, Amin Saberi, and Mohit Singh.
\newblock {Nash Social Welfare, Matrix Permanent, and Stable Polynomials}.
\newblock In {\em Proceedings of the 8th Conference on Innovations in
  Theoretical Computer Science (ITCS)}, 2017.

\bibitem[Azi19]{aziz2019developments}
Haris Aziz.
\newblock Developments in multi-agent fair allocation.
\newblock {\em arXiv preprint arXiv:1911.09852}, 2019.

\bibitem[BCE{\etalchar{+}}16]{brandt2016handbook}
Felix Brandt, Vincent Conitzer, Ulle Endriss, J{\'e}r{\^o}me Lang, and Ariel~D
  Procaccia.
\newblock {\em Handbook of computational social choice}.
\newblock Cambridge University Press, 2016.

\bibitem[BD05]{BezakovaD05}
Ivona Bez{\'{a}}kov{\'{a}} and Varsha Dani.
\newblock Allocating indivisible goods.
\newblock {\em SIGecom Exchanges}, 5(3):11--18, 2005.

\bibitem[BGHM17]{BGH+17earning}
Xiaohui Bei, Jugal Garg, Martin Hoefer, and Kurt Mehlhorn.
\newblock {Earning Limits in Fisher Markets with Spending-Constraint
  Utilities}.
\newblock In {\em Proceedings of the International Symposium on Algorithmic
  Game Theory (SAGT)}, pages 67--79, 2017.

\bibitem[BKV18]{BarmanKV18}
Siddharth Barman, Sanath~Kumar Krishnamurthy, and Rohit Vaish.
\newblock Finding fair and efficient allocations.
\newblock In {\'{E}}va Tardos, Edith Elkind, and Rakesh Vohra, editors, {\em
  Proceedings of the 2018 {ACM} Conference on Economics and Computation,
  Ithaca, NY, USA, June 18-22, 2018}, pages 557--574. {ACM}, 2018.

\bibitem[CCG{\etalchar{+}}18]{ChaudhuryCGGHM18}
Bhaskar~Ray Chaudhury, Yun~Kuen Cheung, Jugal Garg, Naveen Garg, Martin Hoefer,
  and Kurt Mehlhorn.
\newblock On fair division for indivisible items.
\newblock In Sumit Ganguly and Paritosh~K. Pandya, editors, {\em 38th {IARCS}
  Annual Conference on Foundations of Software Technology and Theoretical
  Computer Science, {FSTTCS} 2018, December 11-13, 2018, Ahmedabad, India},
  volume 122 of {\em LIPIcs}, pages 25:1--25:17. Schloss Dagstuhl -
  Leibniz-Zentrum f{\"{u}}r Informatik, 2018.

\bibitem[CCK09]{ChakrabartyCK09}
Deeparnab Chakrabarty, Julia Chuzhoy, and Sanjeev Khanna.
\newblock On allocating goods to maximize fairness.
\newblock In {\em 50th Annual {IEEE} Symposium on Foundations of Computer
  Science, {FOCS} 2009, October 25-27, 2009, Atlanta, Georgia, {USA}}, pages
  107--116. {IEEE} Computer Society, 2009.

\bibitem[CDG{\etalchar{+}}17]{ColeDGJMVY17}
Richard Cole, Nikhil~R. Devanur, Vasilis Gkatzelis, Kamal Jain, Tung Mai,
  Vijay~V. Vazirani, and Sadra Yazdanbod.
\newblock Convex program duality, fisher markets, and nash social welfare.
\newblock In Constantinos Daskalakis, Moshe Babaioff, and Herv{\'{e}} Moulin,
  editors, {\em Proceedings of the 2017 {ACM} Conference on Economics and
  Computation, {EC} '17, Cambridge, MA, USA, June 26-30, 2017}, pages 459--460.
  {ACM}, 2017.

\bibitem[CFS17]{ConitzerFS17}
Vincent Conitzer, Rupert Freeman, and Nisarg Shah.
\newblock Fair public decision making.
\newblock In Constantinos Daskalakis, Moshe Babaioff, and Herv{\'{e}} Moulin,
  editors, {\em Proceedings of the 2017 {ACM} Conference on Economics and
  Computation, {EC} '17, Cambridge, MA, USA, June 26-30, 2017}, pages 629--646.
  {ACM}, 2017.

\bibitem[CG15]{CG15approximating}
Richard Cole and Vasilis Gkatzelis.
\newblock {Approximating the Nash Social Welfare with Indivisible Items}.
\newblock In {\em Proceedings of the Forty-Seventh Annual ACM on Symposium on
  Theory of Computing (STOC)}, pages 371--380, 2015.

\bibitem[CGM20]{chaudhury2020fair}
Bhaskar~Ray Chaudhury, Jugal Garg, and Ruta Mehta.
\newblock Fair and efficient allocations under subadditive valuations, 2020.

\bibitem[CKM{\etalchar{+}}19]{CaragiannisKMPS19}
Ioannis Caragiannis, David Kurokawa, Herv{\'{e}} Moulin, Ariel~D. Procaccia,
  Nisarg Shah, and Junxing Wang.
\newblock The unreasonable fairness of maximum nash welfare.
\newblock {\em {ACM} Trans. Economics and Comput.}, 7(3):12:1--12:32, 2019.

\bibitem[DNS10]{DobzinskiNS10}
Shahar Dobzinski, Noam Nisan, and Michael Schapira.
\newblock Approximation algorithms for combinatorial auctions with
  complement-free bidders.
\newblock {\em Math. Oper. Res.}, 35(1):1--13, 2010.

\bibitem[End17]{endriss2017trends}
Ulle Endriss.
\newblock {\em Trends in Computational Social Choice}.
\newblock Lulu. com, 2017.

\bibitem[Fei09]{Feige09}
Uriel Feige.
\newblock On maximizing welfare when utility functions are subadditive.
\newblock {\em {SIAM} J. Comput.}, 39(1):122--142, 2009.

\bibitem[GHIM09]{GoemansHIM09}
Michel~X. Goemans, Nicholas J.~A. Harvey, Satoru Iwata, and Vahab~S. Mirrokni.
\newblock Approximating submodular functions everywhere.
\newblock In Claire Mathieu, editor, {\em Proceedings of the Twentieth Annual
  {ACM-SIAM} Symposium on Discrete Algorithms, {SODA} 2009, New York, NY, USA,
  January 4-6, 2009}, pages 535--544. {SIAM}, 2009.

\bibitem[GHM18]{GHM18Approximating}
Jugal Garg, Martin Hoefer, and Kurt Mehlhorn.
\newblock {Approximating the Nash Social Welfare with Budget-Additive
  Valuations}.
\newblock In {\em Proceedings of the 29th Annual {ACM}-{SIAM} Symposium on
  Discrete Algorithms (SODA)}, pages 2326--2340, 2018.

\bibitem[GKK19]{GargKK19}
Jugal Garg, Pooja Kulkarni, and Rucha Kulkarni.
\newblock Approximating nash social welfare under submodular valuations through
  (un)matchings.
\newblock {\em CoRR}, abs/1912.12541, 2019.

\bibitem[GKK20]{GargKK20}
Jugal Garg, Pooja Kulkarni, and Rucha Kulkarni.
\newblock Approximating nash social welfare under submodular valuations through
  (un)matchings.
\newblock In Shuchi Chawla, editor, {\em Proceedings of the 2020 {ACM-SIAM}
  Symposium on Discrete Algorithms, {SODA} 2020, Salt Lake City, UT, USA,
  January 5-8, 2020}, pages 2673--2687. {SIAM}, 2020.

\bibitem[GP15]{goldman2015spliddit}
Jonathan Goldman and Ariel~D Procaccia.
\newblock Spliddit: Unleashing fair division algorithms.
\newblock {\em ACM SIGecom Exchanges}, 13(2):41--46, 2015.

\bibitem[KP07]{Khot2007ApproximationAF}
Subhash Khot and Ashok~Kumar Ponnuswami.
\newblock Approximation algorithms for the max-min allocation problem.
\newblock In {\em APPROX-RANDOM}, 2007.

\bibitem[Lee17]{Lee17}
Euiwoong Lee.
\newblock Apx-hardness of maximizing nash social welfare with indivisible
  items.
\newblock {\em Inf. Process. Lett.}, 122:17--20, 2017.

\bibitem[Mou03]{Moulin03}
Herv{\'{e}} Moulin.
\newblock {\em Fair division and collective welfare}.
\newblock {MIT} Press, 2003.

\bibitem[NR14]{NR14}
Trung~Thanh Nguyen and J{\"{o}}rg Rothe.
\newblock Minimizing envy and maximizing average nash social welfare in the
  allocation of indivisible goods.
\newblock {\em Discret. Appl. Math.}, 179:54--68, 2014.

\bibitem[NRTV07]{nisan2007algorithmic}
Noam Nisan, T~Roughgarden, E~Tardos, and Vijay~V Vazirani.
\newblock Algorithmic game theory.-cambridge university press, 2007.

\bibitem[SF11]{svitkina2011submodular}
Zoya Svitkina and Lisa Fleischer.
\newblock Submodular approximation: Sampling-based algorithms and lower bounds.
\newblock {\em SIAM Journal on Computing}, 40(6):1715--1737, 2011.

\bibitem[Von08]{vondrak2008optimal}
Jan Vondr{\'a}k.
\newblock Optimal approximation for the submodular welfare problem in the value
  oracle model.
\newblock In {\em Proceedings of the fortieth annual ACM symposium on Theory of
  computing}, pages 67--74. ACM, 2008.

\end{thebibliography}

\appendix

\section{Appendix}

\subsection{Missing Proof from Section~\ref{section:notation}}
\label{app:notation}
This section restates and proves Proposition \ref{prop:large-p}.

\PropMp*
\begin{proof}
The first inequality $\M_{-\infty}(x_1, \ldots, x_n) ~\le ~ \M_{p}(x_1, \ldots, x_n)$ is an instantiation of the generalized mean inequality. 
For the second bound $\M_{p}(x_1, \ldots, x_n) ~\le ~2^{1/n} \ \M_{-\infty}(x_1, \ldots, x_n)$, we assume, without loss of generality, that $x_1 = \min_i x_i$. Note that $(x_i/x_1)^p \ge 0$ for all $i \in [n]$ and, hence, we have $\left(1 + \left(\frac{x_2}{x_1}\right)^p + \ldots + \left(\frac{x_n}{x_1}\right)^p\right)\geq 1$. Therefore, the fact that $p$ is negative gives us 
\begin{align}
\left(1+ \left(\frac{x_2}{x_1}\right)^p + \ldots + \left(\frac{x_n}{x_1}\right)^p\right)^{1/p} & \le 1^{1/p} = 1 \label{equation:p-ineq}
\end{align}

This inequality leads to the stated upper bound 
\begin{align*}
\M_p(x_1, \ldots, x_n) & = \left( \frac{1}{n}(x_1^p + \ldots + x_n^p) \right)^{1/p} \\
	& = \frac{x_1}{n^{1/p}}  \left(1 + \left(\frac{x_2}{x_1}\right)^p + \ldots + \left(\frac{x_n}{x_1}\right)^p \right)^{1/p} \\
	& \le \frac{x_1}{n^{1/p}}   \tag{from equation~\eqref{equation:p-ineq}} \\
	& \le x_1 \ n^{1/(n \log n)}  \tag{since $p \leq -n \log n$} \\ &  = ~2^{1/n} x_1 ~= ~ 2^{1/n} \M_{-\infty}(x_1, \ldots, x_n) \, .
\end{align*}
\end{proof}

\subsection{Missing Proofs from Section~\ref{section:lower-bound}}
\label{appendix:query}
Here we establish Theorem~\ref{theorem:query}.

\TheoremQuery*
\begin{proof}
Consider a family of instances with $n$ agents and $m=n^2$ goods. In addition, for a subset of goods $S \subseteq [m]$, define an additive function $a_S (\cdot)$ as follows: $a_S(g) \coloneqq 1$ if $g \in S$, otherwise (if $g \notin S$) we have $a_S(g) \coloneqq 0$. 
That is, $a_S(T) = |S \cap T|$ for all $T \subseteq [m]$. Furthermore, for an arbitrarily small constant $\delta >0$, define the additive function $\overline{a} (\cdot)$ as $\overline{a}(g) \coloneqq \frac{1+\delta}{n^{1-2 \delta}}$ for each good $g \in [m]$.

We now construct an XOS function
\begin{align}
f(T) \coloneqq  \max \left\{ \max \limits _{S \subseteq [m]: |S| \leq (1+ \delta)n^{4\delta}} a_S(T), \ \overline{a}(T) \right\} \qquad \text{for all $T \subseteq [m]$} \label{defn:v}
\end{align}

For any subset of goods $T$, if $|T|\leq (1+ \delta)n^{4\delta}$, then we have $f (T)=a_T(T)=|T|$. This equality follows from the fact that, in this cardinality range, $a_T(T)$ is strictly greater than $\overline{a}(T)$. Also, note that for sets of size more than $(1+ \delta)n^{4\delta}$, the values of the additive functions $\{ a_S \}_{S}$ considered in (\ref{defn:v}) plateau at $(1+ \delta)n^{4\delta}$. However, it is only when $|T|>n^{1+ 2\delta}$, that the term $\overline{a}(T)$ dominates $a_S(T)$. 

Using function $f$, we define an XOS valuation $v_i$ for each agent $i \in [n]$. Select a partition $T_1,\ldots, T_n$ of the $m=n^2$ goods uniformly at random such that $|T_i| =n$ for each $i \in [n].$ Then, valuation ${v}_i$ is defined as follows 
\begin{align}
{v}_i(T) \coloneqq \max \left\{f (T), a_{T_i}(T) \right\} \qquad \text{for all $T \subseteq [m]$} \label{defn:v-i}
\end{align}

To prove the stated query lower bound, we consider two families of instances. One in which the valuation of each agent is $f$ and the other in which the agents' valuations are $v_i$s. In Claim \ref{claim:exp-queries} below we prove that an exponential number of value queries are required to differentiate between these two cases, i.e., to determine whether an agent $i$'s valuation is $f$ or $v_i$. 

However, this distinction can be made via an $n^{1 - \varepsilon}$ approximation to the optimal $p$-mean welfare: if the valuation of each agent is $f$, then the maximum average social welfare ($1$-mean welfare)  is ${O}\left ( n^{4\delta}\right )$ and, hence, by monotonicity of $p$-mean welfare (with $p \leq 1$) we get that the optimal $p$-mean welfare is also ${O}\left ( n^{4\delta}\right )$ in this case. By contrast, when the valuation functions are $v_i$s, by assigning subset $T_i$ to each agent $i$, we can ensure that each agent receives a bundle of value $n$. That is, in this setting, the optimal $p$-mean welfare is equal to $n$.  This gap between the optimal $p$-mean welfare in the two cases implies that, with a sub-linear approximation in hand, one can distinguish between $f$ and $v_i$. However, the following claim shows that this task requires an exponential number of value queries. 


\begin{claim} \label{claim:exp-queries} 
An exponential number of value queries are required to distinguish whether an agent $i$'s valuation is $f$ or $v_i$. 
\end{claim}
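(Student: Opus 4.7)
The plan is to follow the adversarial/random-partition strategy of Dobzinski et al.~\cite{DobzinskiNS10}: fix any (deterministic, possibly adaptive) algorithm $\mathcal{A}$ that issues at most $q$ value queries and claims to decide whether the oracle implements $f$ or $v_i$. Run $\mathcal{A}$ on the instance in which the oracle returns $f(\cdot)$; this determines, once and for all, a fixed sequence of query sets $T^{(1)}, T^{(2)}, \ldots, T^{(q)} \subseteq [m]$. I will then show that, under a uniformly random partition $T_1,\ldots,T_n$ of $[m]$ into blocks of size $n$, with probability $1 - q \cdot e^{-\Omega(n^{4\delta})}$ we have $v_i(T^{(j)}) = f(T^{(j)})$ for every $j\in[q]$. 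Because $v_i(T) = \max\{f(T),\,a_{T_i}(T)\}$, this amounts to showing that $|T^{(j)} \cap T_i| \le f(T^{(j)})$ for all $j$. On this high-probability event $\mathcal{A}$ receives identical transcripts in both worlds and cannot distinguish, so $q$ must be at least $e^{\Omega(n^{4\delta})}$.

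The core ingredient is a per-query tail bound. For a fixed query set $T \subseteq [m]$ with $m = n^2$ and for $T_i$ a random block of size $n$, the random variable $X := |T \cap T_i|$ is a hypergeometric with mean $\mu := |T|/n$, and a standard Chernoff-type bound yields $\Pr[X > (1+\lambda)\mu] \le e^{-\Omega(\lambda^2 \mu)}$ for $\lambda \le 1$ and $\Pr[X > \lambda \mu] \le e^{-\Omega(\lambda \mu)}$ for $\lambda \ge 1$. I would split the analysis into three regimes according to $|T|$. First, if $|T| \le (1+\delta)n^{4\delta}$, then $f(T) = |T|$ (attained by $a_T$), and trivially $X \le |T| = f(T)$, so no deviation occurs. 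Second, if $(1+\delta)n^{4\delta} \le |T| \le n^{1+2\delta}$, then $f(T) \ge (1+\delta) n^{4\delta}$, while $\mu \le n^{2\delta}$, so we need $X$ to exceed $\mu$ by a factor of at least $n^{2\delta}$, which Chernoff bounds by $e^{-\Omega(n^{4\delta})}$. Third, if $|T| \ge n^{1+2\delta}$, then $f(T) \ge \overline{a}(T) = (1+\delta)|T|/n^{1-2\delta} = (1+\delta)\, n^{2\delta} \mu$, so again we are asking $X$ to exceed $(1+\delta)n^{2\delta}$ times its mean, and Chernoff gives $e^{-\Omega(n^{2\delta} \mu)} \le e^{-\Omega(n^{4\delta})}$. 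Taking a union bound over the (non-adaptive, by construction) queries $T^{(1)},\ldots,T^{(q)}$ gives the claimed $q \cdot e^{-\Omega(n^{4\delta})}$.

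The step I expect to require the most care is the adaptivity issue: a priori, the $j$-th query depends on earlier oracle answers, and the partition is used to define all answers. The clean way around this is to not let $\mathcal{A}$ see the ``true'' oracle at all: run $\mathcal{A}$ in the fictitious world where the oracle always returns $f$, so that $T^{(1)},\ldots,T^{(q)}$ are deterministic functions of $\mathcal{A}$ alone. Then the randomness over $T_i$ is independent of the query sequence, and the per-query Chernoff bounds combine by a plain union bound. On the event that all $|T^{(j)} \cap T_i| \le f(T^{(j)})$, the oracle $v_i$ returns exactly the same values as $f$ on every query, so the $f$-transcript and the $v_i$-transcript coincide and $\mathcal{A}$ must produce the same output in both worlds. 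Extending the argument to randomized $\mathcal{A}$ is immediate via Yao's minimax principle, since the above gives a hard distribution over inputs. Choosing $q = e^{o(n^{4\delta})}$ contradicts distinguishability and proves the claim; combined with the earlier remark that an $n^{1-\varepsilon}$-approximation to the $p$-mean welfare suffices to distinguish the two families whenever $4\delta < \varepsilon$, this yields Theorem~\ref{theorem:query}.
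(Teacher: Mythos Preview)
Your proposal is correct and follows essentially the same approach as the paper: the identical three-way case split on $|T|$ (trivial for small $|T|$, Chernoff on the hypergeometric $|T\cap T_i|$ in the two larger regimes) to show $a_{T_i}(T) > f(T)$ occurs with probability $e^{-\Omega(n^{4\delta})}$ per query, then a union bound. Your treatment is in fact a bit more careful than the paper's, which states the per-set tail bound and concludes directly; you make explicit the adaptivity workaround (simulate $\mathcal{A}$ against the $f$-oracle to freeze the query sequence before drawing $T_i$) and the extension to randomized algorithms via Yao, both of which the paper leaves implicit.
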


\begin{proof} We will prove that, for any subset $S \subseteq [m]$, the inequality $v_i(S) \neq f(S)$ holds with exponentially small probability.\footnote{Recall that the partition $T_1, \ldots, T_n$ of the $m$ goods is selected at random.} Hence, an exponential number of value queries are required to distinguish between these two functions.  
 
Note that, for a subset $S$, $v_i(S) \neq f(S)$ iff $a_{T_i}(S) > f(S)$ (see~\eqref{defn:v-i}). The following cases identify conditions (on subsets $S$) under which we have $a_{T_i}(S) > f(S)$. \\

\noindent {Case 1:} $|S|\leq (1+\delta)n^{4\delta}$. For subsets with this small a cardinality, functions $v_i$ and $f$ have the same value. In particular, if $|S|\leq (1+\delta)n^{4\delta}$, then $f(S) = |S| \geq |S \cap T_i| = a_{T_i}(S)$ and, hence, the equality $v_i(S) = f(S)$ holds.  \\

\noindent {Case 2:} $(1+\delta)n^{4\delta} < |S| \leq  n^{1+2\delta}$. As observed previously, in this cardinality range, $f(S) = (1+\delta)n^{4\delta}$. Therefore, for the inequality $a_{T_i}(S) > f(S)$ to hold, we require $a_{T_i}(S) = |S \cap T_i| > (1+ \delta) n^{4\delta}$.  

Since the partition $T_1, \ldots, T_n$ of $[m]$ (with $m = n^2$ and $|T_j|= n$ for each $j$) was chosen uniformly at random, we have $\mathbb{E} \left[ |S \cap T_i| \right] = |S|/n \leq n^{4 \delta}$. Applying Chernoff bounds, we get $\Pr \left\{ |S \cap T_i| \geq (1+ \delta) n^{4\delta} \right\} \leq \mathrm{exp} \left( - \frac{n^{4\delta} \delta^2 }{3} \right)$. Therefore, in this case, $v_i(S) \neq f(S)$ with exponentially small probability. \\

\noindent {Case 3:} $|S|> n^{1+2\delta}$. Here, $f(S) = \overline{a} (S) = \frac{(1+ \delta)}{n^{1-2\delta}} |S|$ (see equation (\ref{defn:v})). Therefore, the inequality $a_{T_i}(S) > f(S)$ holds iff $|S \cap T_i| > (1+ \delta) \frac{|S|}{n^{1-2\delta}} $. Again, via Chernoff bound, we get that $\Pr \left\{ |S \cap T_i| > (1+ \delta) \frac{|S|}{n^{1-2\delta}} \right\}$ is exponentially small. 

Overall, these observations show that, irrespective of the size of the $S$, the separation $v_i(S) \neq f(S)$ holds with exponentially small probability. This establishes the claim. 
\end{proof}

As mentioned previously, Claim \ref{claim:exp-queries} implies that an exponential number of value queries are required to approximate the maximum $p$-mean welfare within a factor of $n^{1 - \varepsilon}$.  
\end{proof}

\subsection{Missing Proofs from Section~\ref{section:m-n}}
\label{appendix:m-n}

In this section we restate and prove Theorems \ref{theorem:m-n-upper} and \ref{theorem:m-n-lower}. 

\TheoremMNupper*
\begin{proof} 
We actually show that an optimal matching achieves the stated approximation bound. Consider the bipartite graph $ \left( [n] \cup [m],  [n]  \times [m], \{ w(i,g) \}_{i \in [n], g \in [m]} \right)$ with weight of edge $(i,g) \in [n] \times [m] $ set as $w(i,g) =  \left(v_i(g)\right)^p$ (for $p=0$, we set the weight to be $\log v_i(g)$). Compute a left-perfect \emph{minimum}-weight matching, $\pi$, in this bipartite graph. Assign the remaining items arbitrarily to the agents and let $\mathcal{P} = (P_1,\ldots ,P_n)$ be the resulting allocation. We will show that $\mathcal{P}$ obtains the required approximation ratio. 

For the given $p \le 0$, let $\mathcal{A}^* = (A_1^*,\ldots ,A_n^*)$ denote a $p$-mean welfare maximizing allocation. Since $p \le 0$, in allocation $\mathcal{A}^*$ each agent is allocated at least one good (otherwise the optimal value is zero). Hence, any agent is allocated at most $m-n+1$ goods, $|A^*_i| \leq m-n+1$ for all $i \in [n]$. Let $g_i^*$ be the highest valued (by $i$) good in $A^*_i$, i.e., $g^*_i = \argmax_{g\in A^*_i} \ v_i(g)$. Then, the subadditivity of $v_i$ implies that 
\begin{align}
v_i(g_i^*) \geq \frac{1}{m-n+1} v_i(A_i^*)  \, .\label{eqn:m-n-matching}
\end{align}

\noindent Further, since allocating $g_i^*$ to agent $i$ constitutes a feasible matching of the goods to agents and $\pi$ is a minimum-weight matching, we have $\sum_{i \in [n]} v_i(\pi(i))^p \le \sum_{i \in [n]} v_i(g_i^*)^p$. The fact that $p$ is negative gives us $ \left( \sum_{i}  v_i(\pi(i))^p \right)^{1/p}  \geq \left(  \sum_{i}  v_i(g_i^*)^p \right)^{1/p}$. Hence, 

\begin{align*}
\left( \frac{1}{n} \sum_{i \in [n]} \left(v_i(P_i)\right)^p \right)^{1/p} & \ge \left( \frac{1}{n} \sum_{i \in [n]} \left(v_i(\pi(i))\right)^p \right)^{1/p} 
	~ \ge ~\left( \frac{1}{n} \sum_{i \in [n]} \left(v_i(g_i^*)\right)^p \right)^{1/p}  
	~ \ge ~ \frac{1}{m-n+1} \left( \frac{1}{n} \sum_{i \in [n]} \left(v_i(A_i^*)\right)^p \right)^{1/p} \, .
	\end{align*}

The last inequality follows from equation~\eqref{eqn:m-n-matching}. This completes the proof.
\end{proof}

\TheoremMNlower*
\begin{proof}
Our reduction is from the \textsc{Partition} problem: given a set $S = \left\{s_i \in \mathbb{Z}_+ \right\}_{i=1}^m$ of $m$ positive integers, determine if there exists a subset $T \subseteq [m]$ of indices such that $\sum_{i \in T} s_i =\frac{1}{2}\sum_{i \in [m]} s_i$. \textsc{Partition} is one of the classic {\rm NP}-hard problems.

Given an instance of \textsc{Partition}, we construct a fair division instance with additive valuations as follows. Write $z \coloneqq \sum_{i \in [m]} s_i$ and consider $n=m$ agents and $m$ goods, $g_1, \ldots, g_m$. The first two agents have value $s_i$ for good $g_i$, for each $i \in [m]$; hence, these two agents have identical additive valuations. The remaining $m-2$ agents have value $0$ for all goods.\footnote{Instead of zero, we could also set $v_i(g) = \left((z+s_i)^p - z^p\right)/2$ for these agents and all goods. In any optimal allocation, all goods must then be allocated to the first two agents.}  Note that in this instance, since the number of goods and number of agents is equal, an $(m-n+1)$-approximation algorithm must in fact return an allocation with maximum $p$-mean welfare.

We now claim that there is an allocation of $p$-mean welfare $\left(\frac{2}{m} \right)^{\frac{1}{p}} \frac{z}{2}$ iff the underlying \textsc{Partition} instance has the required set $T$. Suppose that there exists such a set $T$. Then, we can assign all goods with indices in the set $T$ to the first agent and the remaining goods to the second agent. The first two agents achieve value $z/2$ each under this allocation and, hence, the $p$-mean welfare is exactly $\left(\frac{2}{m} \right)^{\frac{1}{p}} \frac{z}{2}$. If the required set $T$ does not exist, then in any allocation one of the first two agents has value $y < z/2$, while the other has value at most $y' \le z-y$. All other agents have valuation $0$. Since the $p$-mean welfare function is strictly concave in the values for $p \in (0,1)$, it follows that in this case any allocation has $p$-mean welfare strictly less than  $\left(\frac{2}{m} \right)^{\frac{1}{p}} \frac{z}{2}$. Hence, the {\rm NP}-hardness of \textsc{Partition} implies that the fair division problem at hand is {\rm NP}-hard as well. 
\end{proof}

\end{document}